\documentclass[letterpaper, 10 pt, conference]{ieeeconf}  % Comment this line out if you need a4paper

\IEEEoverridecommandlockouts                              % This command is only needed if 
\usepackage{cite}

\usepackage{amsmath,amssymb,amsfonts}
\usepackage{amsthm}
\usepackage{graphicx}
\usepackage{textcomp}
\usepackage{hyperref}
\usepackage{xcolor}
\usepackage{algpseudocode}
\usepackage{tikz}
\usepackage{algorithm}
\usepackage{subfigure}

\DeclareMathOperator*{\argmin}{arg\,min}
\newtheorem{remark}{Remark}
\newtheorem{assumption}{Assumption}
\newtheorem{definition}{Definition}
\newtheorem{proposition}{Proposition}
\newtheorem{theorem}{Theorem}
\newtheorem{lemma}{Lemma}
\newtheorem{corollary}{Corollary}

\title{\LARGE \bf
Routing Equilibrium in Mixed-Autonomy Traffic Networks with Altruistic Autonomous Agents
}

\author{Albert Author$^{1}$ and Bernard D. Researcher$^{2}$% <-this % stops a space
\thanks{*This work was not supported by any organization}% <-this % stops a space
\thanks{$^{1}$Albert Author is with Faculty of Electrical Engineering, Mathematics and Computer Science,
        University of Twente, 7500 AE Enschede, The Netherlands
        {\tt\small albert.author@papercept.net}}%
\thanks{$^{2}$Bernard D. Researcheris with the Department of Electrical Engineering, Wright State University,
        Dayton, OH 45435, USA
        {\tt\small b.d.researcher@ieee.org}}%
}

\author{Lihui Yi$^{1}$ and Ermin Wei$^{1,2}$% <-this % stops a space
\thanks{{This work was supported in part by the National Science Foundation (NSF) under Grant CNS-2030251 and EECS-2216970 and the Center for Engineering Sustainability and Resilience at Northwestern University.}}
\thanks{$^{1}$Dept. of Electrical and Computer Engineering, Northwestern University, Evanston, USA}%
\thanks{$^{2}$Dept. of Industrial Engineering and Management Sciences, Northwestern University, Evanston, USA}%
\thanks{Email addresses: \url{lihuiyi2027@u.northwestern.edu}, \url{ermin.wei@northwestern.edu}
}
}

\begin{document}

\maketitle
\thispagestyle{empty}
\pagestyle{empty}

%%%%%%%%%%%%%%%%%%%%%%%%%%%%%%%%%%%%%%%%%%%%%%%%%%%%%%%%%%%%%%%%%%%%%%%%%%%%%%%%
\begin{abstract}
    Recent advancements in vehicle autonomy have drawn interest in understanding the impact of autonomous vehicles on traffic systems. In this paper, we study a traffic assignment problem in a mixed-autonomy setting where both human-driven and autonomous vehicles coexist. We model the interaction as a simultaneous routing game where human drivers are self-interested and aim to minimize their own travel times, while autonomous agents are altruistic and aim to minimize the total social cost. {The standard nonatomic congestion game analysis establishes the existence of equilibrium to this game under convex cost functions, and does not have any implication of its uniqueness. In this work, we formulate the equilibrium as a variational inequality (VI), which enables us to establish the equilibrium existence without convexity assumption, and guarantees the uniqueness of the aggregated link flow and social cost at equilibrium under a specific class of cost functions.}
    Leveraging this VI framework, we provide sufficient conditions under which {including} autonomous agents  improves, deteriorates, or has no effect on social cost. While the possibility of deterioration has been established in prior work, our results complement existing worst-case bounds by explicitly characterizing sufficient conditions under which each outcome occurs, thereby providing a {deeper} understanding of mixed-autonomy traffic systems. Furthermore, we consider a centralized scenario where a social planner optimizes the routing of autonomous agents, and show that the same equilibrium is achieved as in the decentralized scenario {when assuming convex costs.}{Finally, we conduct numerical experiments that illustrate how social cost changes with the amount of autonomous vehicles under different system parameters.}
\end{abstract}

%%%%%%%%%%%%%%%%%%%%%%%%%%%%%%%%%%%%%%%%%%%%%%%%%%%%%%%%%%%%%%%%%%%%%%%%%%%%%%%%
\section{Introduction}

Autonomous vehicle technology promises to transform transportation systems. In a fully autonomous regime, autonomous vehicles can improve traffic efficiency, safety, and energy efficiency through precise control and coordination \cite{Levin2015}. In the near term, however, transportation networks will remain in a mixed-autonomy regime, where autonomous vehicles coexist and interact with human-driven vehicles. Therefore, understanding the equilibrium and efficiency of such mixed-autonomy systems is crucial for policy-making and infrastructure planning \cite{USDOT_AV30}.

Even at low penetration rates, autonomous vehicles have demonstrated the potential to improve system performance significantly. Early research has shown that a single autonomous agent can dampen stop-and-go waves and stabilize traffic flow locally \cite{Cui2017}. At a microscopic level, reinforcement learning (RL) has been used for autonomous vehicles to generate efficient platooning and spacing behaviors \cite{Wu2017}, while social-preference and implicit-coordination mechanisms can improve safety and efficiency in human--autonomy interactions \cite{Schwarting2019,Zhao2023,Toghi2022}. Researchers have also investigated the equilibrium of lane-changing maneuvers \cite{Mehr2021}, multi-lane merging through coalition games \cite{Fu2025}, and coupled vehicle-signal control using Stackelberg games \cite{Zhang2025}. Additionally, the integration of autonomous vehicles into mobility-on-demand services (e.g., ride-hailing) has been explored to optimize pricing and social welfare \cite{Xie2023}. While these studies provide valuable insights into specific mixed-autonomy traffic interactions and control systems, there remains a critical need to understand the macroscopic impact of mixed-autonomy on network-wide traffic assignment and equilibrium.

{At the network level, traditional transportation system with human-only traffic uses the classical Wardrop equilibrium \cite{Wardrop} to capture the traffic assignment equilibrium: each infinitesimal traveler chooses a minimum-travel-time route, so every used route has travel time no greater than any unused alternative.} However, mixed autonomy departs from this benchmark because different agents may have different information, physical characteristics, or objectives. Existing work has studied information asymmetry between human and autonomous vehicles \cite{Wang2019,Zhong2024}, heterogeneous lateral and longitudinal interactions \cite{Li2022}, and the price of anarchy when autonomous vehicles reduce congestion through shorter headways \cite{Lazar2018}. {These works enrich the classical routing model through heterogeneous information or congestion effects, but they do not assign autonomous agents a fundamentally different routing objective from the rest of the population.} {A distinct line of work instead} allows autonomous agents to behave altruistically, meaning that they may sacrifice their own travel time to improve overall system performance. For example, Biyik \textit{et al.} consider a network of parallel roads and demonstrate that even a small degree of altruism can significantly reduce total social cost \cite{Biyik2020}.

{The closest prior works to ours are \cite{Brown2020} and \cite{Hill2023}, which model non-selfish agents via a parametric cost function $t_a(f_a) + \varphi_a f_a t_a'(f_a)$, {where $t_a(f_a)$ is the travel time on link $a$ as a function of flow $f_a$, $f_a t_a'(f_a)$ is the marginal cost pricing accounting for agents' externalities}, and $\varphi_a \ge 0$ is a design parameter representing the degree of altruism or toll sensitivity. In both papers, equilibrium existence is established under convex travel times and a single origin-destination (O/D) pair, and uniqueness of the equilibrium or social cost is not studied. Brown and Marden \cite{Brown2020} consider heterogeneous toll sensitivities and prove an impossibility result for network-agnostic mechanisms: any mechanism that improves social cost on some class of networks must worsen it on others. %, so their conclusions are worst-case over classes of networks and sensitivities. 
Hill and Brown \cite{Hill2023} focus on series-parallel networks with selfish and $\varphi$-altruistic agents and characterize a robust choice of $\varphi$ that balances improvement against deterioration over that network class. These papers study worst-case or robust behavior over classes of networks and altruism levels, however, an explicit characterization of the mixed-autonomy impact on social cost for a fixed network instance and cost structure remains open.}

This paper takes a complementary perspective. For a fixed network and link-cost structure, we ask: \textit{what can be said about the equilibrium and the resulting social cost when a fraction of travelers are fully altruistic autonomous agents?} To answer this question, we formulate the mixed-autonomy routing game as a variational inequality (VI) \cite{Dafermos_VI}. %, \ly{since it is more general than the congestion game analysis.} 
{This formulation allows us to establish equilibrium existence without convexity assumption on travel times, and applies to multiple O/D pairs. Moreover, we show the uniqueness of the equilibrium aggregated link flow and social cost under the Bureau of Public Roads (BPR) travel time function, which is widely used in traffic modeling \cite{BPR_function,BPR_function2,BPR_function3,BPR_function4}.} We further derive sufficient conditions under which including autonomous agents improves, deteriorates, or has no effect on the social cost. Finally, under convex travel time functions, we show that decentralized altruistic routing yields the same equilibrium outcome as centralized routing of the autonomous fleet.

Our main contributions are summarized as follows:
\begin{itemize}
    \item We formulate the mixed-autonomy traffic assignment problem as a variational inequality and establish equilibrium existence under general continuously differentiable nondecreasing link travel times.
    \item Under BPR travel time functions, we prove uniqueness of the aggregated equilibrium link flow and the resulting equilibrium social cost, even though the VI cost mapping is not strictly monotone.
    \item We derive algebraic sufficient conditions under which including fully altruistic autonomous agents improves, deteriorates, or has no effect on the social cost, thereby complementing worst-case analyses with instance-specific characterizations.
    \item Under convex travel time functions, we prove that the equilibrium outcome induced by decentralized altruistic autonomous agents coincides with that induced by a centralized controller routing the autonomous fleet.
\end{itemize}

The remainder of this paper is organized as follows. Section \ref{sec: prelim} reviews preliminaries on variational inequalities. Section \ref{sec: model} introduces the mixed-autonomy model. Section \ref{sec: equilibrium analysis} characterizes the equilibrium using VI theory and establishes existence and uniqueness results. Section \ref{sec: impact of autonomous agents} analyzes the impact of autonomous agents on social cost. Section \ref{sec: numerical experiments} presents numerical experiments. Section \ref{sec: centralized} discusses the equivalence between decentralized and centralized control. Section \ref{sec: conclusion} concludes the paper.

Throughout the paper, we use the following notations. For a vector $x$, $x_i$ or $[x]_i$ denotes its $i$-th entry. For a matrix $A$, $A_{ij}$ denotes its $(i,j)$-th entry, $[A]_i$ denotes its $i$-th row, and $A^T$ denotes its transpose. For a set $\mathcal{S}$, $|\mathcal{S}|$ denotes its cardinality. We use $\mathbb{R}^n$ to denote the $n$-dimensional Euclidean space, and $\mathbb{R}^{n}_+$ to denote the non-negative orthant in $\mathbb{R}^n$. We use $\mathbf{0}$ and $\mathbf{1}$ to denote the all-zero and all-one vectors, respectively, with dimensions determined by the context. For two vectors $x$ and $y$ of the same dimension, we use $x\geq y$ to denote the element-wise inequality, i.e., $x_i \geq y_i$ for all $i$. The dot product of two vectors $x$ and $y$ is denoted by $x \cdot y$, which is also equivalent to $x^T y$. For two sets $\mathcal{S}_1$ and $\mathcal{S}_2$, we use $\mathcal{S}_1 \setminus \mathcal{S}_2$ to denote the set difference, i.e., the set of elements in $\mathcal{S}_1$ but not in $\mathcal{S}_2$.

Due to space limitations, some proofs are relegated to the appendix.

\section{Preliminaries: Variational Inequality}\label{sec: prelim}

In this section, following \cite{nagurney_2013_network}, we briefly review the variational inequality (VI) framework and several standard results that will be used in our equilibrium analysis.
\begin{definition}[Variational Inequality]
    The finite-dimensional variational inequality problem, $VI(F,\mathcal{K})$, is to determine a vector $x^* \in \mathcal{K} \subset \mathbb{R}^n$, such that 
    \begin{equation*}
        F(x^*) \cdot (x - x^*) \geq 0, \quad \forall x \in \mathcal{K},
    \end{equation*}
    where $F$ is a given continuous function from $\mathcal{K}$ to $\mathbb{R}^n$ and $\mathcal{K}$ is a given closed convex set.
\end{definition}

We next recall two standard results on existence and uniqueness.

\begin{proposition}\label{prop: VI existence}
The VI problem $VI(F,\mathcal{K})$ has at least one solution if $\mathcal{K}$ is compact and convex, and $F$ is continuous on $\mathcal{K}$.
\end{proposition}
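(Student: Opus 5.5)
The plan is to reduce the variational inequality to a fixed-point problem for the Euclidean projection onto $\mathcal{K}$, and then invoke Brouwer's fixed-point theorem. Let $P_{\mathcal{K}}:\mathbb{R}^n\to\mathcal{K}$ denote this projection. Because $\mathcal{K}$ is nonempty, closed and convex, $P_{\mathcal{K}}(y)$ is well defined and unique for every $y$. It is characterized by the obtuse-angle condition
\begin{equation*}
(P_{\mathcal{K}}(y)-y)\cdot(x-P_{\mathcal{K}}(y))\ge 0 \quad \forall x\in\mathcal{K}.
\end{equation*}
Moreover, $P_{\mathcal{K}}$ is nonexpansive, and in particular continuous.

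\textbf{Step 1 (fixed-point characterization).} For any $\gamma>0$, I will show that $x^*$ solves $VI(F,\mathcal{K})$ if and only if
\begin{equation*}
x^*=P_{\mathcal{K}}\bigl(x^*-\gamma F(x^*)\bigr).
\end{equation*}
Apply the obtuse-angle characterization with $y=x^*-\gamma F(x^*)$. Then $x^*=P_{\mathcal{K}}(y)$ holds exactly when
\begin{equation*}
(x^*-y)\cdot(x-x^*)=\gamma F(x^*)\cdot(x-x^*)\ge 0 \quad \text{for all } x\in\mathcal{K}.
\end{equation*}
Dividing by $\gamma>0$ gives precisely the VI condition, so both directions follow at once. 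I will take $\gamma=1$.

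\textbf{Step 2 (continuous self-map).} Define $T:\mathcal{K}\to\mathcal{K}$ by $T(x)=P_{\mathcal{K}}(x-F(x))$. The inner map $x\mapsto x-F(x)$ is continuous on $\mathcal{K}$ because $F$ is continuous there. Composing it with the continuous projection shows that $T$ is continuous, and by construction $T$ takes values in $\mathcal{K}$.

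\textbf{Step 3 (Brouwer).} The set $\mathcal{K}$ is a nonempty compact convex subset of $\mathbb{R}^n$, so Brouwer's theorem applies to the continuous map $T$ and gives a fixed point $x^*=T(x^*)$. By Step 1, this $x^*$ solves $VI(F,\mathcal{K})$.

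The only step with real content is Step 1, which rests on the projection characterization; that is standard Hilbert-space material and I would cite it rather than reprove it. One subtlety: Brouwer is usually stated for closed balls, not general convex compacta. To cover this, I would note that a nonempty compact convex set is homeomorphic to a closed ball of some dimension, or alternatively apply Brouwer to $T\circ P_{\mathcal{K}}$ on a large ball containing $\mathcal{K}$. Any fixed point of that composite lies in $\mathcal{K}$, since the composite takes values in $\mathcal{K}$, and is therefore a fixed point of $T$. I would use this second route to avoid the homeomorphism argument. Nonemptiness of $\mathcal{K}$ is implicitly assumed in the statement.
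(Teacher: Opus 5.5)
Your argument is correct: the projection characterization gives the equivalence with $x^*=P_{\mathcal{K}}(x^*-\gamma F(x^*))$, and applying Brouwer to $T\circ P_{\mathcal{K}}$ on a ball containing $\mathcal{K}$ extends the theorem to a general nonempty compact convex $\mathcal{K}$. The paper does not prove this proposition itself but quotes it as a standard result from \cite{nagurney_2013_network}, where it is proved by exactly this route (projection fixed-point reformulation, then Brouwer), so your approach is essentially the intended one.
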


\begin{proposition}\label{prop: VI uniqueness}
The solution to the VI problem $VI(F,\mathcal{K})$ is unique, provided that $F$ is strictly monotone on $\mathcal{K}$, i.e.,
\begin{equation*}
    \big[ F(x_1) - F(x_2) \big] \cdot (x_1 - x_2) > 0, \quad \forall x_1, x_2 \in \mathcal{K}, x_1 \neq x_2.
\end{equation*}
\end{proposition}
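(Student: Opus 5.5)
Suppose, for contradiction, that $x_1^*, x_2^* \in \mathcal{K}$ both solve $VI(F,\mathcal{K})$ and $x_1^* \neq x_2^*$. The plan is to test each solution's defining inequality against the other solution as the competitor point, add the two inequalities, and contradict strict monotonicity. Existence is not needed here; the statement only concerns uniqueness.

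First, because $x_2^* \in \mathcal{K}$ is an admissible test point for the VI satisfied by $x_1^*$, we get
\begin{equation*}
F(x_1^*) \cdot (x_2^* - x_1^*) \geq 0 .
\end{equation*}
Second, by symmetry, taking $x_1^*$ as the test point in the VI satisfied by $x_2^*$ gives
\begin{equation*}
F(x_2^*) \cdot (x_1^* - x_2^*) \geq 0 .
\end{equation*}
Third, we add the two inequalities and rewrite the sum as
\begin{equation*}
\big[F(x_1^*) - F(x_2^*)\big] \cdot (x_2^* - x_1^*) \geq 0,
\end{equation*}
which is equivalent to
\begin{equation*}
\big[F(x_1^*) - F(x_2^*)\big] \cdot (x_1^* - x_2^*) \leq 0 .
\end{equation*}
Strict monotonicity of $F$ on $\mathcal{K}$, applied to the distinct points $x_1^*$ and $x_2^*$, says this same quantity is strictly positive. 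This is a contradiction, so $x_1^* = x_2^*$.

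There is no real obstacle in this argument. The only point needing care is that both solutions must lie in $\mathcal{K}$, so that each is a legal test point for the other's inequality, and this holds by the definition of a VI solution. Convexity and closedness of $\mathcal{K}$ and continuity of $F$ are not used for uniqueness.
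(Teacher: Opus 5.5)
Your proof is correct and is the standard argument: test each solution against the other, add the two inequalities, and contradict strict monotonicity. The paper does not prove this proposition itself but recalls it from Nagurney; the same cross-substitution-and-add technique is what it uses in the proof of Theorem~\ref{thm: uniqueness of aggregated link flow} in Appendix~\ref{appendix: proof of thm 2}.
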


Many mathematical problems can be formulated as VI problems, including systems of equations, optimization problems, complementarity problems, fixed point problems, etc. For example, consider an optimization problem $\min_{x \in \mathcal{K}} \, f(x)$, where $f$ is continuously differentiable and convex, and $\mathcal{K}$ is a closed convex set. Then, $x^*$ is a solution to this problem if and only if it is a solution to the VI problem $VI(\nabla f, \mathcal{K})$.

\section{Model}\label{sec: model}
In this section, we introduce a nonatomic traffic assignment model with mixed autonomy. We consider a transportation network represented by a directed graph $\mathcal{G}=(\mathcal{N},\mathcal{L})$, where $\mathcal{N}$ is the set of nodes and $\mathcal{L}$ is the set of directed links. Links are denoted by $a, b, c$, and $|\mathcal{L}|=L$. We assume there is one origin-destination (O/D) pair, representing the origin and destination of the traffic flow. The set of paths connecting the O/D pair is denoted by $\mathcal{P}$. We assume $\mathcal{P}$ is non-empty, otherwise the O/D pair is not feasible. Paths are denoted by $p, q$, and $|\mathcal{P}|=P$. 

Suppose the O/D pair has a fixed unit travel demand, which is the total flow that needs to be routed from the origin to the destination. We consider two types of agents (vehicles) in the network: human agents and autonomous agents. Let $\alpha \in [0,1]$ denote the fraction of autonomous agents. Then, the demands of human and autonomous agents are $1-\alpha$ and $\alpha$, respectively.

We denote the flow of human and autonomous agents on path $p \in \mathcal{P}$ by $x_p^H$ and $x_p^A$, respectively. The aggregated flow on path $p$ is then $x_p = x_p^H + x_p^A$. The flow needs to meet the demand of the O/D pair, which leads to the following constraints:
\begin{equation}\label{eq: flow meets demand}
\sum_{p \in \mathcal{P}} x_p^H = 1-\alpha, \quad \sum_{p \in \mathcal{P}} x_p^A = \alpha.
\end{equation}
We group the path flows into column vectors $x^H = (x_p^H)_{p \in \mathcal{P}}$, $x^A = (x_p^A)_{p \in \mathcal{P}}$, and $x = (x_p)_{p \in \mathcal{P}}$. 

For each link $a \in \mathcal{L}$, let $f_a^H$ and $f_a^A$ denote the flow of human and autonomous agents on link $a$, respectively. The aggregated flow on link $a$ is then $f_a = f_a^H + f_a^A$. We group the link flows into column vectors $f^H = (f_a^H)_{a \in \mathcal{L}}$, $f^A = (f_a^A)_{a \in \mathcal{L}}$, and $f = (f_a)_{a \in \mathcal{L}}$. The link flows are related to the path flows via the link-path incidence matrix \cite{incidence_matrix}. Specifically, let $\Delta \in \{0,1\}^{L \times P}$ denote the link-path incidence matrix, where $\Delta_{ap} = 1$ if link $a$ is on path $p$, and $\Delta_{ap} = 0$ otherwise. Then, we have
\begin{equation}\label{eq: link-path relation}
f^H = \Delta x^H, \quad f^A = \Delta x^A.
\end{equation}
Note that once the path flows $x^H$ and $x^A$ are determined, the link flows $f^H$ and $f^A$ are also determined. However, the opposite direction is not necessarily true. One can have multiple path flow patterns that lead to the same link flow pattern. The set of feasible link flow patterns is given by
\begin{equation*}
\mathcal{K} = \left\{ (f^H, f^A) \; | \; \exists x^H, x^A \ge \mathbf{0} \text{ satisfying } \eqref{eq: flow meets demand} \text{ and } \eqref{eq: link-path relation} \right\}.
\end{equation*}
Since the feasible set of $(x^H, x^A)$ is compact and convex, and the mapping from $(x^H, x^A)$ to $(f^H, f^A)$ is linear, the set $\mathcal{K}$ is also compact and convex.

Next, we define the \textit{travel time} functions for the links. The travel time on link $a$ depends only on the aggregated link flow $f_a$, regardless of agent types. We denote the travel time function on link $a$ by $t_a(f_a)$, which is assumed to be continuously differentiable, non-decreasing, and non-negative. 

Human agents are self-interested and minimize their own travel times. Thus, the cost (disutility) for human agents on path $p$ is given by
\begin{equation*}
C_p^H(x) = \sum_{a \in \mathcal{L}} \Delta_{ap} t_a(f_a).
\end{equation*}
Autonomous agents are altruistic and evaluate routes according to the total \textit{social cost}, defined as the total travel time incurred by all agents in the network \cite{Wardrop},
\begin{equation*}
S(x) = \sum_{p \in \mathcal{P}} \Big( x_p \sum_{a \in \mathcal{L}} \Delta_{ap} t_a(f_a) \Big) = \sum_{a \in \mathcal{L}} f_a t_a(f_a),
\end{equation*}
where the second equality follows from \eqref{eq: link-path relation}. Note that both $C_p^H(x)$ and $S(x)$ depend on the path flows $x$ only through the link flows $f$. Thus, we can also write them as $C_p^H(f^H,f^A)$ and $S(f^H,f^A)$, or simply $C_p^H(f)$ and $S(f)$.

We focus first on the decentralized setting, in which autonomous agents are not controlled by a central planner. In Section \ref{sec: centralized}, we will show that, under convex travel time functions, the equilibrium outcome in this decentralized setting coincides with that of a corresponding centralized routing problem for the autonomous flow.

We study the simultaneous routing game in which human and autonomous agents choose their paths to minimize their respective objectives. A feasible path flow pattern $(x^{H*}, x^{A*})$ is an \textit{equilibrium} if no (infinitesimal) agent can improve their objective by unilaterally switching to a different path. Specifically:
\begin{enumerate}
    \item No human agent can reduce their travel time by unilaterally switching paths;
    \item No autonomous agent can decrease the social cost by unilaterally switching paths.
\end{enumerate}
The induced link flow pattern $(f^{H*}, f^{A*}) = (\Delta x^{H*}, \Delta x^{A*})$ is called an equilibrium link flow pattern. Unless otherwise specified, we refer to the equilibrium link flow pattern when we mention equilibrium in the rest of the paper. The formal $\epsilon$-deviation definition is given in Appendix \ref{appendix: proof of thm 1}.

The first condition is the classical Wardrop equilibrium \cite{Wardrop}: all paths used by human agents have equal travel time, and no unused path has a lower travel time. That is, for every path $p \in \mathcal{P}$,
\begin{align*}
    C_p^H(f^{H*}, f^{A*}) &\begin{cases}= \lambda^H, & \text{if } x_p^{H*} > 0, \\ \geq \lambda^H, & \text{if } x_p^{H*} = 0,\end{cases}
\end{align*}
where $\lambda^H$ is the equilibrium travel time for human agents.

The second condition involves the social cost, which is a global objective rather than a path-based cost. Unlike the human agents' condition, it does not directly reduce to a Wardrop-like principle. We will show in Section \ref{sec: equilibrium analysis} (Theorem \ref{thm: social cost to path cost}) that this condition is equivalent to a Wardrop-like condition with respect to a path-based marginal social cost.

\section{Equilibrium Existence, Uniqueness, and Computation}\label{sec: equilibrium analysis}
In this section, we characterize the equilibrium of the mixed-autonomy game, formulate it as a variational inequality (VI) problem, and establish existence and uniqueness results.

We begin by addressing the autonomous agents' equilibrium condition. Recall that each autonomous agent seeks to minimize the social cost, a global objective. The following result shows that this condition induces a Wardrop-like condition with respect to path-based marginal social costs. Define
\begin{align*}
    C_p^A(f^H, f^A) = \sum_{a\in L}\Delta_{ap}\,\frac{d\bigl(f_a t_a(f_a)\bigr)}{d f_a}.
\end{align*}

\begin{theorem}\label{thm: social cost to path cost}
The equilibrium link flow pattern $(f^{H*}, f^{A*}) \in \mathcal{K}$, with corresponding path flow pattern $(x^{H*}, x^{A*})$, is characterized by the following condition: there exists scalars $\lambda^H$ and $\lambda^A$ such that, for every path $p \in \mathcal{P}$,
\begin{align*}
    C_p^H(f^{H*}, f^{A*}) &\begin{cases}= \lambda^H, & \text{if } x_p^{H*} > 0, \\ \geq \lambda^H, & \text{if } x_p^{H*} = 0,\end{cases} \\
    C_p^A(f^{H*}, f^{A*}) &\begin{cases}= \lambda^A, & \text{if } x_p^{A*} > 0, \\ \geq \lambda^A, & \text{if } x_p^{A*} = 0,\end{cases}
\end{align*}
where $\lambda^H$ and $\lambda^A$ denote the equilibrium costs for human agents and autonomous agents, respectively.
\end{theorem}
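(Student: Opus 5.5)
The plan is to use the formal $\epsilon$-deviation definition of equilibrium and treat the two agent types separately. The human condition is the classical Wardrop condition, so the work lies in the autonomous agents.

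\textbf{Human agents.} An infinitesimal mass $\epsilon$ of humans moving from $p$ to $q$ changes its own travel time by $C_q^H - C_p^H + O(\epsilon)$, using continuity of $t_a$. No profitable deviation for small $\epsilon$ is therefore equivalent to $C_p^H \le C_q^H$ whenever $x_p^{H*}>0$. Setting $\lambda^H=\min_q C_q^H$ gives the first block of conditions. Conversely, if those conditions hold, any deviation from a used path gives a change of at least $0$ to first order. Deviations that tie at first order need a limiting argument consistent with how the appendix formalizes ``$\epsilon$-deviation''.

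\textbf{Autonomous agents.} Consider moving mass $\epsilon\in(0,x_p^{A*}]$ from path $p$ to path $q$. The new link flows are $f_a + \epsilon(\Delta_{aq}-\Delta_{ap})$. Since $g_a(f_a):=f_a t_a(f_a)$ is continuously differentiable, a first-order Taylor expansion (mean value theorem) gives
\begin{equation*}
S(f^\epsilon)-S(f^*) = \epsilon\sum_{a}(\Delta_{aq}-\Delta_{ap})\,g_a'(f_a^*) + o(\epsilon) = \epsilon\bigl(C_q^A - C_p^A\bigr)+o(\epsilon).
\end{equation*}

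\textbf{Necessity.} If $x_p^{A*}>0$ and $C_q^A<C_p^A$ for some $q$, then for small $\epsilon$ the social cost strictly decreases. This contradicts equilibrium. Hence every used path attains $\lambda^A:=\min_q C_q^A$, and unused paths satisfy $C_q^A \ge \lambda^A$.

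\textbf{Sufficiency.} The conditions make the first-order change nonnegative, which matches the infinitesimal-agent definition once the derivative notion of ``cannot decrease'' is adopted. I would state explicitly that the $\epsilon$-deviation definition is read in the limit $\epsilon\to0$, i.e., as a directional-derivative condition. This is what makes the equivalence exact without convexity.

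\textbf{Main obstacle.} The delicate point is this sufficiency direction. For nonconvex $g_a$, a zero or positive first-order change does not preclude a decrease at finite $\epsilon$. The proof therefore hinges on matching the appendix's formal definition, as the limit of normalized changes $\lim_{\epsilon\to0}\frac{1}{\epsilon}[S(f^\epsilon)-S(f^*)]\ge0$. I would first fix that definition precisely and then run the Taylor argument above. The same normalized-limit convention also resolves the tie cases for humans.
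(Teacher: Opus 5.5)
Your proposal is correct and follows essentially the same route as the paper's Appendix A: you expand the change in $S$ under an $\epsilon$-shift of autonomous flow from $p$ to $q$, divide by $\epsilon$ and let $\epsilon\to0$ to get $C_q^A\ge C_p^A$ on used paths, and you read the converse as a nonnegative directional derivative. You make explicit something the paper's ``reversing the inequalities'' step leaves implicit: sufficiency holds only under this first-order reading of the $\epsilon$-deviation definition, since at ties a nonconvex $f_a t_a(f_a)$ can otherwise give strict decreases. For humans no limiting convention is needed at all, because nondecreasing $t_a$ gives $C_q^H(f^{H'},f^{A*})\ge C_q^H(f^{H*},f^{A*})$ directly.
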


\begin{proof}
The first condition is the standard Wardrop condition for human travelers. For the second condition, consider an infinitesimal deviation of autonomous flow from one path to another and compute the resulting change in the social cost. This yields the marginal social cost along each path. The full proof is given in Appendix \ref{appendix: proof of thm 1}.
\end{proof}

The quantity $C_p^A(f^{H}, f^{A})$ serves as the cost/disutility for autonomous agents. It can be interpreted as the marginal social cost on path $p$, which measures the increase in total social cost generated by assigning an additional infinitesimal amount of flow to path $p$. At the link level, define
$$c_a^H(f)=t_a(f_a), \; c_a^A(f) = \frac{d(f_a t_a(f_a))}{df_a} = t_a(f_a) + f_a t_a'(f_a).$$
Thus, human agents perceive the physical travel time, while autonomous agents perceive the travel time plus the congestion externality they impose on all users.

This cost function of autonomous agents is closely related to {marginal cost pricing} (Pigouvian toll) \cite{Beckmann1956}: in a standard single-class congestion game, charging every user a toll $f_a t_a'(f_a)$ causes each user to perceive the marginal social cost, and the resulting Wardrop equilibrium coincides with the social optimum. However, the equivalence established in Theorem \ref{thm: social cost to path cost} is not a straightforward application of this classical result, as two types of agents coexist in our setting. Furthermore, the classical Pigouvian guarantee that marginal cost pricing achieves the social optimum does not hold in our setting, because only a fraction $\alpha$ of agents perceive the marginal social cost while the remaining fraction $1-\alpha$ remain self-interested. The net effect on social cost therefore depends on the network structure, as we analyze in Section \ref{sec: impact of autonomous agents}.

By Theorem \ref{thm: social cost to path cost}, mixed-autonomy routing game can be viewed as a two-class traffic assignment problem \cite{Dafermos_multiclass} with link costs $c^H$ and $c^A$. This allows us to formulate the equilibrium as a VI \cite{nagurney_2013_network}. We group the link costs into column vectors $c^H(f) = (c_a^H(f))_{a \in \mathcal{L}}$ and $c^A(f) = (c_a^A(f))_{a \in \mathcal{L}}$. The path costs are then $C_p^H(f) = \sum_{a \in \mathcal{L}} \Delta_{ap} c_a^H(f)$ and $C_p^A(f) = \sum_{a \in \mathcal{L}} \Delta_{ap} c_a^A(f)$.

\begin{lemma}\label{lemma: VI formulation}
A link flow pattern $(f^{H*}, f^{A*}) \in \mathcal{K}$ is an equilibrium if and only if it satisfies the following VI problem:
\begin{equation}\label{eq: VI}
c(\Bar{f}^*) \cdot (\Bar{f} - \Bar{f}^*) \geq 0, \quad \forall (f^H, f^A) \in \mathcal{K},
\end{equation}
where $\Bar{f} = \begin{bmatrix} f^H \\ f^A \end{bmatrix}$, $\Bar{f}^* = \begin{bmatrix} f^{H*} \\ f^{A*} \end{bmatrix}$, and $c(\Bar{f}^*) = \begin{bmatrix} c^H(\Bar{f}^*) \\ c^A(\Bar{f}^*) \end{bmatrix}$.
\end{lemma}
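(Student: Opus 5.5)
We will prove the equivalence between the Wardrop-type conditions of Theorem \ref{thm: social cost to path cost} and the VI \eqref{eq: VI}, working at the path level and transferring to link flows via \eqref{eq: link-path relation}. The key identity is that for any feasible path flows $(x^H,x^A)$ with link flows $(f^H,f^A)=(\Delta x^H,\Delta x^A)$,
\begin{equation*}
c(\Bar{f}^*)\cdot(\Bar{f}-\Bar{f}^*) = \sum_{p\in\mathcal{P}} C_p^H(f^*)\,(x_p^H - x_p^{H*}) + \sum_{p\in\mathcal{P}} C_p^A(f^*)\,(x_p^A - x_p^{A*}),
\end{equation*}
which follows from $c^H(f^*)\cdot \Delta x^H = (\Delta^T c^H(f^*))\cdot x^H$ and $[\Delta^T c^H(f^*)]_p = C_p^H(f^*)$, and similarly for the autonomous class. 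The link costs depend on $\Bar f^*$ only through the aggregate $f^*=f^{H*}+f^{A*}$, so $C^H_p$ and $C^A_p$ evaluated at the equilibrium are fixed numbers in this expression. Because the expression does not depend on which path decomposition of $\Bar f$ is chosen, the VI over $\mathcal{K}$ is equivalent to the path-flow VI over the product of two scaled simplices $\{x^H\ge 0,\ \mathbf{1}^T x^H = 1-\alpha\}\times\{x^A\ge 0,\ \mathbf{1}^T x^A=\alpha\}$.

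\emph{Equilibrium $\Rightarrow$ VI.} Assume the conditions of Theorem \ref{thm: social cost to path cost} hold with multipliers $\lambda^H,\lambda^A$. Then $C_p^H(f^*)\ge\lambda^H$ for all $p$, with equality whenever $x_p^{H*}>0$. Hence
\begin{equation*}
\sum_p C_p^H(f^*)(x_p^H-x_p^{H*}) \ge \lambda^H\sum_p x_p^H - \lambda^H\sum_p x_p^{H*} = \lambda^H\big((1-\alpha)-(1-\alpha)\big)=0.
\end{equation*}
The same argument with $\lambda^A$ and total $\alpha$ handles the autonomous term. Adding the two terms and using the identity gives \eqref{eq: VI}.

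\emph{VI $\Rightarrow$ equilibrium.} Suppose $(f^{H*},f^{A*})$ satisfies \eqref{eq: VI}, and fix a path decomposition $(x^{H*},x^{A*})$. The test flows are constructed by shifting mass within one class. Take $p$ with $x_p^{H*}>0$ and any path $q$. Set $x^H = x^{H*} - x_p^{H*}e_p + x_p^{H*}e_q$, where $e_p$ denotes the unit vector for path $p$, and leave $x^A=x^{A*}$ unchanged. This is feasible, and the VI yields $x_p^{H*}\big(C_q^H(f^*)-C_p^H(f^*)\big)\ge 0$, i.e.\ $C_q^H(f^*)\ge C_p^H(f^*)$. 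Therefore every used human path attains $\lambda^H:=\min_q C_q^H(f^*)$, and the unused paths have cost at least $\lambda^H$. The identical argument for the autonomous class yields $\lambda^A$. If $\alpha=0$ or $\alpha=1$ one class has no used paths; we then still define its $\lambda$ as the minimum path cost, so the conditions hold vacuously.

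The main subtlety is making sure the link-level VI is a faithful restatement of the path-level one, since $\mathcal{K}$ is defined through the existence of some path decomposition. The remedy is the identity above: every element of $\mathcal{K}$ is the image of a feasible path flow, and the inner product evaluates the same for any decomposition. The second point is to note that the equilibrium notion is already characterized by Theorem \ref{thm: social cost to path cost}, so the argument only needs to link those Wardrop conditions to the VI and does not have to revisit the $\epsilon$-deviation definition.
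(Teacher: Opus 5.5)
Your proof is correct and takes the same route as the paper. The paper obtains the lemma by combining the Wardrop-type characterization of Theorem~\ref{thm: social cost to path cost} with the standard multiclass Wardrop--VI equivalence from \cite{Dafermos_VI}, and your argument writes that equivalence out explicitly: the decomposition-independent path-level identity for the inner product, together with mass-shifting test flows within each class.
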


This Lemma follows directly from Theorem \ref{thm: social cost to path cost} and the literature \cite{Dafermos_VI}. Lemma \ref{lemma: VI formulation} immediately yields existence.

\begin{corollary}[Equilibrium Existence]\label{cor: equilibrium existence}
    There exists at least one equilibrium link flow pattern $(f^{H*}, f^{A*}) \in \mathcal{K}$.
\end{corollary}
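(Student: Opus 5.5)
The plan is to apply Proposition~\ref{prop: VI existence} to the variational inequality of Lemma~\ref{lemma: VI formulation}, and then use the equivalence in that lemma to turn a VI solution into an equilibrium link flow pattern. Two hypotheses need checking: the feasible set $\mathcal{K}$ must be compact and convex, and the cost mapping $c$ must be continuous on $\mathcal{K}$.

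\textbf{Step 1: the feasible set.} Let $X$ be the set of path flows $(x^H,x^A)\ge \mathbf{0}$ satisfying \eqref{eq: flow meets demand}.
\begin{itemize}
\item $X$ is a product of two scaled simplices, $(1-\alpha)$ times the simplex in $\mathbb{R}^P$ and $\alpha$ times the simplex in $\mathbb{R}^P$.
\item Each factor is nonempty (because $\mathcal{P}\neq\emptyset$), closed, bounded and convex. Hence $X$ is nonempty, compact and convex.
\item $\mathcal{K}$ is the image of $X$ under the linear map $(x^H,x^A)\mapsto(\Delta x^H,\Delta x^A)$. Linear maps are continuous, so they preserve compactness, and they also preserve convexity.
\end{itemize}
Therefore $\mathcal{K}$ is nonempty, compact and convex, as already remarked in Section~\ref{sec: model}.

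\textbf{Step 2: continuity of the cost mapping.} Every component of $c$ is one of two link costs:
\begin{itemize}
\item $c_a^H(\bar f)=t_a(f_a^H+f_a^A)$, which is continuous because $t_a$ is continuously differentiable, hence continuous.
\item $c_a^A(\bar f)=t_a(f_a)+f_a t_a'(f_a)$, which is continuous because it combines continuous functions: $t_a$, the continuous derivative $t_a'$ (this is exactly where the $C^1$ assumption is used), and the product with the linear function $f_a$.
\end{itemize}
Each $f_a$ is a linear function of $\bar f$, so $c$ is continuous on $\mathcal{K}$.

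\textbf{Step 3: conclusion.} Proposition~\ref{prop: VI existence} now gives a solution $\bar f^*\in\mathcal{K}$ of \eqref{eq: VI}. By Lemma~\ref{lemma: VI formulation}, which rests on Theorem~\ref{thm: social cost to path cost}, this $\bar f^*$ is an equilibrium link flow pattern. Any path flow $(x^{H*},x^{A*})$ in $X$ that maps to $\bar f^*$ is a corresponding equilibrium path flow.

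No convexity of $t_a$, and so no monotonicity of $c$, is needed anywhere in this argument. Nothing here is a real obstacle. The only delicate point is that continuity of $c^A$ requires $t_a'$ to be continuous, which is why the travel times are assumed continuously differentiable rather than merely differentiable. The substantive work lies in Lemma~\ref{lemma: VI formulation}, which we are allowed to assume.
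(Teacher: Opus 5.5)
Your proposal is correct and is essentially the paper's own argument. You check that $\mathcal{K}$ is compact and convex as a linear image of a product of scaled simplices, check that $c$ is continuous on $\mathcal{K}$ using $t_a \in C^1$, apply Proposition~\ref{prop: VI existence}, and convert the VI solution into an equilibrium via Lemma~\ref{lemma: VI formulation}; you are also slightly more careful than the paper in noting that $\mathcal{K}$ is nonempty because $\mathcal{P}\neq\emptyset$, which Proposition~\ref{prop: VI existence} implicitly requires.
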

The proof follows from Proposition \ref{prop: VI existence}, since $\mathcal{K}$ is compact and convex, and $c$ is continuous on $\mathcal{K}$.

\begin{remark}\label{rmk: generalization}
The VI-based existence argument requires only that $c$ be continuous on a compact convex set, which holds whenever the travel time functions $t_a(\cdot)$ are continuously differentiable and nondecreasing. In comparison, the existence results in related works \cite{Brown2020, Hill2023} rely on convexity of $t_a(\cdot)$, which is a stronger assumption. Our VI formulation thus establishes equilibrium existence under weaker conditions. Furthermore, Theorem \ref{thm: social cost to path cost}, Lemma \ref{lemma: VI formulation}, and Corollary \ref{cor: equilibrium existence} all extend to networks with multiple O/D pairs---a generalization not addressed in \cite{Brown2020} or \cite{Hill2023}, both of which restrict to a single O/D pair. The proof for the general case is provided in Appendix \ref{appendix: proof of thm 1}.
\end{remark}

Although existence is guaranteed, uniqueness generally fails. The VI problem \eqref{eq: VI} may admit multiple solutions when the cost mapping $c$ is not strictly monotone (Proposition \ref{prop: VI uniqueness}). Importantly, $c$ can fail to be strictly monotone even when both $c_a^H$ and $c_a^A$ are individually strictly increasing.

A simple example illustrates the non-uniqueness: consider a network with two parallel links having identical travel time functions. Any flow pattern that splits the total demand equally between the two links is an equilibrium, regardless of how the flow is allocated between human and autonomous agents on each link.

We next show that, although the individual flows $f^{H*}$ and $f^{A*}$ may not be unique, the aggregated link flow $f^* = f^{H*} + f^{A*}$ is unique under a standard class of travel time functions. Specifically, we impose the following assumption.
\begin{assumption}\label{assump: BPR}
For every link $a\in \mathcal{L}$, the travel time function is of the form
\begin{equation*}
t_a(f_a)=k_a f_a^n+b_a,
\end{equation*}
where $k_a>0$, $b_a\ge 0$, and $n\ge 1$.
\end{assumption}
Assumption \ref{assump: BPR} includes the standard Bureau of Public Roads (BPR) travel time function \cite{BPR_function,BPR_function2,BPR_function3,BPR_function4}, which is widely used in traffic modeling. The BPR function is given by
\begin{equation*}
t_a(f_a)=t_a^0\left(1+\theta\left(\frac{f_a}{m_a}\right)^\beta\right),
\end{equation*}
where $t_a^0$ is the free-flow travel time (travel time with zero traffic) on link $a$, $m_a$ is the capacity of link $a$, and $\theta, \beta$ are parameters, commonly set as $\theta = 0.15$ and $\beta = 4$ \cite{BPR_function3}. By setting $k_a = t_a^0 \theta / m_a^{\beta}$, $n = \beta$, and $b_a = t_a^0$, we can see that the BPR function fits into the aforementioned class of functions.

Under Assumption \ref{assump: BPR}, we have the following uniqueness result.

\begin{theorem}[Uniqueness of Aggregated Link Flow]\label{thm: uniqueness of aggregated link flow}
Under Assumption \ref{assump: BPR}, the aggregated link flow at equilibrium $f^* = f^{H*} + f^{A*}$ is unique.
\end{theorem}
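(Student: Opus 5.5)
The plan is to compare two arbitrary equilibria through the VI of Lemma~\ref{lemma: VI formulation}, after first reweighting the two agent classes. The reweighting is needed because the stacked mapping $c$ is not strictly monotone, so Proposition~\ref{prop: VI uniqueness} cannot be applied directly.

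\textbf{Step 1: the VI splits by class.} The set $\mathcal{K}$ is a Cartesian product $\mathcal{K}^H\times\mathcal{K}^A$. Here $\mathcal{K}^H=\{\Delta x^H: x^H\ge \mathbf 0,\ \mathbf 1^T x^H=1-\alpha\}$ and $\mathcal{K}^A=\{\Delta x^A: x^A\ge \mathbf 0,\ \mathbf 1^T x^A=\alpha\}$, since the two demand constraints in \eqref{eq: flow meets demand} do not interact. In \eqref{eq: VI} we may therefore fix $f^A=f^{A*}$ and vary only $f^H$, and then do the reverse. This shows that an equilibrium satisfies the two separate inequalities
\begin{equation*}
c^H(f^*)\cdot(f^H-f^{H*})\ge 0\ \ \forall f^H\in\mathcal{K}^H,\qquad c^A(f^*)\cdot(f^A-f^{A*})\ge 0\ \ \forall f^A\in\mathcal{K}^A .
\end{equation*}
Each of these inequalities may be multiplied by its own positive constant without changing its solution set.

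\textbf{Step 2: compare two equilibria.} Let $(f^{H1},f^{A1})$ and $(f^{H2},f^{A2})$ be two equilibria, with aggregates $f^1$ and $f^2$. Plug each equilibrium into the other's inequalities and add the pairs. This gives
\begin{equation*}
\bigl(c^H(f^1)-c^H(f^2)\bigr)\cdot(f^{H1}-f^{H2})\le 0,\qquad \bigl(c^A(f^1)-c^A(f^2)\bigr)\cdot(f^{A1}-f^{A2})\le 0 .
\end{equation*}
Under Assumption~\ref{assump: BPR} the link costs are $c_a^H=k_af_a^n+b_a$ and $c_a^A=t_a+f_at_a'=(n+1)k_af_a^n+b_a$. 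Their differences across the two equilibria are therefore $k_a\bigl((f_a^1)^n-(f_a^2)^n\bigr)$ and $(n+1)k_a\bigl((f_a^1)^n-(f_a^2)^n\bigr)$; the constants $b_a$ cancel.

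Multiply the human inequality by $n+1$ and add it to the autonomous one. Both classes then carry the same per-link coefficient, so the sum collapses to
\begin{equation*}
(n+1)\sum_{a\in\mathcal{L}} k_a\bigl((f_a^1)^n-(f_a^2)^n\bigr)\bigl(f_a^1-f_a^2\bigr)\le 0 .
\end{equation*}

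\textbf{Step 3: conclude.} The map $s\mapsto s^n$ is strictly increasing on $\mathbb{R}_+$ and $k_a>0$, so every summand is nonnegative. Each summand is zero only when $f_a^1=f_a^2$. Hence $f^1=f^2$.

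The main obstacle is Step 2. The unweighted sum, i.e.\ the monotonicity test on $c$ itself, leaves a cross term $n\,k_a\bigl((f_a^1)^n-(f_a^2)^n\bigr)(f_a^{A1}-f_a^{A2})$ of indefinite sign. This is exactly why the individual class flows can be non-unique. The key idea is that the product structure of $\mathcal{K}$ lets us rescale the human class by the factor $n+1$, which makes the two classes' cost differences coincide. This works precisely because, under Assumption~\ref{assump: BPR}, the variable part of $c^A$ is a constant multiple of the variable part of $c^H$.
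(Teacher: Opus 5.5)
Your proof is correct and is essentially the paper's argument. The paper adds the full VI inequality to $n$ times the human-only inequality, which comes from the same product structure $\mathcal{K}=\mathcal{K}^H\times\mathcal{K}^A$, and uses $f_a t_a'(f_a)=n\bigl(t_a(f_a)-b_a\bigr)$; that is exactly your $(n+1)$-weighted combination of the two class inequalities, phrased as a contradiction and written as a split into a $c^A$ term minus an externality term rather than as an explicit reweighting.
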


Consequently, the equilibrium social cost is also unique.

\begin{corollary}[Uniqueness of Social Cost]
Under Assumption \ref{assump: BPR}, the social cost at equilibrium $S(f^{H*}, f^{A*})$ is unique.
\end{corollary}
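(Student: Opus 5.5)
The corollary follows almost immediately from Theorem \ref{thm: uniqueness of aggregated link flow}, because the social cost depends on the equilibrium only through the aggregated link flow. The plan is to make that dependence explicit and then apply the theorem.

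First, recall from Section \ref{sec: model} that
\begin{equation*}
S(f^H,f^A)=\sum_{a\in\mathcal{L}} f_a\, t_a(f_a), \qquad f_a=f_a^H+f_a^A.
\end{equation*}
This identity comes from the link-path relation \eqref{eq: link-path relation}. It shows that $S$ is a function of the aggregated vector $f=f^H+f^A$ alone: two patterns with the same aggregate $f$ have the same social cost, however the flow is split between human and autonomous agents.

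Second, take any two equilibria $(f^{H*},f^{A*})$ and $(\tilde f^{H*},\tilde f^{A*})$ in $\mathcal{K}$. At least one exists by Corollary \ref{cor: equilibrium existence}. Under Assumption \ref{assump: BPR}, Theorem \ref{thm: uniqueness of aggregated link flow} gives
\begin{equation*}
f^{H*}+f^{A*}=\tilde f^{H*}+\tilde f^{A*}=f^*.
\end{equation*}
Substituting into the expression for $S$ yields
\begin{equation*}
S(f^{H*},f^{A*})=\sum_a f^*_a t_a(f^*_a)=S(\tilde f^{H*},\tilde f^{A*}).
\end{equation*}
So every equilibrium attains the same social cost, which proves the statement.

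There is no real obstacle here. All the substantive work, namely showing the aggregate is unique even though $c$ is not strictly monotone, lives in Theorem \ref{thm: uniqueness of aggregated link flow}. The only point needing care is that $S$ is well defined on link flows and depends solely on their sum, and that was already noted in the model section.
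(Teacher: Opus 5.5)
Your proposal is correct and matches the paper's own argument: the corollary follows directly from Theorem \ref{thm: uniqueness of aggregated link flow}, since $S=\sum_{a} f_a t_a(f_a)$ depends on an equilibrium only through the aggregated link flow $f^*$. One small notational point: $\tilde f$ is reserved in the paper for the baseline system, so writing the second equilibrium as $(f^{H\dagger},f^{A\dagger})$, as in Appendix \ref{appendix: proof of thm 2}, would avoid confusion.
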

This follows directly from Theorem \ref{thm: uniqueness of aggregated link flow}, since the social cost depends on the flow pattern only through the aggregated link flow $f^*$.

Finally, we comment on computation. Since the equilibrium is characterized by the VI problem \eqref{eq: VI}, one may use standard iterative methods for VI problems \cite{nagurney_2013_network}. A natural approach is Gauss--Seidel-type relaxation scheme: starting from an initial feasible flow pattern, alternately solve for the human equilibrium with the autonomous flow fixed, and then solve for the autonomous equilibrium with the human flow fixed. Each subproblem is a standard single-class traffic assignment problem, and can be handled efficiently using methods such as the Frank--Wolfe algorithm \cite{nagurney_2013_network}. Convergence is guaranteed when the cost mapping $c$ is strongly monotone \cite{nagurney_2013_network}. However, in our setting, $c$ is generally not strongly monotone, hence we do not claim general convergence. Nevertheless, in our numerical experiments in Section \ref{sec: numerical experiments}, the method converges reliably. 

\section{Impact of Autonomous Agents on the System}\label{sec: impact of autonomous agents}

In this section, we study how including autonomous agents changes the equilibrium social cost. We derive sufficient conditions under which the social cost improves, deteriorates, or remains unchanged. Throughout this section, Assumption \ref{assump: BPR} is in force.

We compare two systems. The \textit{baseline system} has no autonomous agents ($\alpha = 0$). We denote its equilibrium link flow pattern by $(\tilde{f}^{*}, \mathbf{0})$, equilibrium path flow pattern by $(\tilde{x}^{*}, \mathbf{0})$, equilibrium human cost by $\tilde{\lambda}^H$, and social cost by $\tilde{S}^*$. The \textit{mixed-autonomy system} has a positive fraction of autonomous agents ($\alpha > 0$). We denote its equilibrium link flow pattern by $(f^{H*}, f^{A*})$, equilibrium path flow pattern by $(x^{H*}, x^{A*})$, equilibrium human cost and autonomous cost by $\lambda^H$ and $\lambda^A$, and social cost by $S^*$.

Note that both systems may admit multiple equilibrium link flow patterns. However, Theorem \ref{thm: uniqueness of aggregated link flow} implies that the equilibrium aggregated link flow is unique in both systems: $\tilde{f}^{*}$ in the baseline system and $f^* = f^{H*} + f^{A*}$ in the mixed-autonomy system. Since the social cost depends only on the aggregated link flow, $\tilde{S}^*$ and $S^*$ are also uniquely determined.

\subsection{Improvements}

Intuitively, autonomous agents may reduce the social cost, since they aim to minimize it. We begin with a sufficient condition for improvement that applies to an arbitrary network.

\begin{lemma}\label{lemma: improvement condition}
If there exists a baseline equilibrium path flow $\tilde{x}^*$ and a mixed-autonomy equilibrium path flow $(x^{H*}, x^{A*})$ such that $x^{H*} \le \tilde{x}^{*}$, then $S^* \le \tilde{S}^*$. The inequality is strict if, in addition, $f^{*} \neq \tilde{f}^{*}$.
\end{lemma}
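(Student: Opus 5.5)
The plan is to compare the mixed-autonomy equilibrium with a cleverly chosen feasible alternative that has the \emph{same} human flow but reproduces the baseline aggregated flow. The argument rests on the fact that, with the human flow held fixed, the autonomous equilibrium condition of Theorem~\ref{thm: social cost to path cost} is exactly the optimality condition of a convex program. The construction is as follows. Set $y = \tilde{x}^* - x^{H*}$. The hypothesis $x^{H*}\le \tilde{x}^*$ gives $y \ge \mathbf{0}$. Moreover $\sum_p y_p = 1 - (1-\alpha) = \alpha$, using that $\tilde{x}^*$ carries unit demand and $x^{H*}$ carries demand $1-\alpha$. Hence $(x^{H*}, y)$ is a feasible path flow for the mixed-autonomy system, with aggregated path flow $x^{H*}+y=\tilde{x}^*$ and aggregated link flow $\tilde{f}^*$. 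Its social cost is therefore $\tilde{S}^*$.

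The key step is to fix the human flow at $x^{H*}$ and consider the problem
\begin{equation*}
\min_{x^A \ge \mathbf{0},\ \sum_p x^A_p = \alpha} \; \Phi(x^A) := \sum_{a\in\mathcal{L}} g_a\bigl([\Delta x^{H*}]_a + [\Delta x^A]_a\bigr), \qquad g_a(f_a) = f_a t_a(f_a).
\end{equation*}
Under Assumption~\ref{assump: BPR}, $g_a(f_a)=k_a f_a^{n+1}+b_a f_a$ is strictly convex on $\mathbb{R}_+$, since $k_a>0$ and $n\ge1$. Thus $\Phi$ is convex in $x^A$, being a sum of convex functions composed with affine maps. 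Its partial derivative with respect to $x^A_p$ is $\sum_a \Delta_{ap} g_a'(f_a) = C^A_p$. The KKT conditions for this simplex-constrained convex program are then exactly the autonomous Wardrop-like condition of Theorem~\ref{thm: social cost to path cost}, with multiplier $\lambda^A$. Since the constraints are linear, KKT is sufficient for global optimality of a convex problem. So $x^{A*}$ is a global minimizer, and $S^* = \Phi(x^{A*}) \le \Phi(y) = \tilde{S}^*$.

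For strictness, I would use strict convexity at the level of link flows. $\Phi$ depends on $x^A$ only through $f^A=\Delta x^A$, and $\sum_a g_a(f^H_a+f^A_a)$ is strictly convex in $f^A$. Over the convex set of feasible $f^A$, the minimizing autonomous link flow is therefore unique. If $f^*\ne\tilde{f}^*$, then $\Delta y = \tilde{f}^* - f^{H*} \ne f^* - f^{H*} = f^{A*}$. So $\Delta y$ is not the minimizer, which gives $\Phi(y) > \Phi(x^{A*})$, i.e.\ $S^*<\tilde{S}^*$.

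The main obstacle I expect is the middle step: justifying rigorously that the equilibrium condition for autonomous agents, derived via infinitesimal deviations, coincides with global optimality of $\Phi$ with $x^{H*}$ frozen. This is where convexity of $f_a t_a(f_a)$ under Assumption~\ref{assump: BPR} is essential. I would check it by verifying the first-order VI condition $\nabla\Phi(x^{A*})\cdot(x^A-x^{A*})\ge0$ directly from the $\lambda^A$ characterization. Summing $C^A_p(x^A_p - x^{A*}_p)$ over $p$ gives $\sum_p C^A_p x^A_p - \lambda^A\alpha \ge \lambda^A\alpha-\lambda^A\alpha=0$. Convexity of $\Phi$ then upgrades this first-order condition to the global inequality $\Phi(x^A)\ge\Phi(x^{A*})$.
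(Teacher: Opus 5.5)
Your proposal is correct and is essentially the paper's argument. The paper likewise freezes $f^{H*}$ and uses the autonomous VI condition $c^A(\bar f^*)\cdot(f^A-f^{A*})\ge 0$ with the test point $f^A=\Delta(\tilde x^*-x^{H*})$. It combines this with the strict gradient inequality for the strictly convex $f_a t_a(f_a)$ to get $\tilde S^*-S^*\ge c^A(\bar f^*)\cdot(\tilde f^*-f^*)\ge 0$, with strict inequality when $\tilde f^*\ne f^*$. Your phrasing, via global optimality of $x^{A*}$ and uniqueness of the minimizing autonomous link flow, is just a repackaging of that same gradient inequality.
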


The condition $x^{H*} \le \tilde{x}^*$ means that, path by path, the human flow in the mixed-autonomy equilibrium does not exceed its baseline value. Thus, autonomous flow can be interpreted as replacing part of the baseline human flow rather than displacing human agents onto less favorable routes. Whenever this occurs, the social cost cannot increase. Since equilibrium need not be unique but the social cost is unique, the condition is stated in existential form: it is enough that one baseline equilibrium path flow and one mixed-autonomy equilibrium path flow satisfy the inequality.

This criterion is useful in two ways. First, it is topology-independent: whenever the condition holds for a given network instance, improvement is guaranteed. Second, it can be used as a proof technique for classes of network topologies by showing that the condition is always satisfied on that class.

{In particular, we find that social cost is always improved on networks consisting of parallel links connected in series. We formalize this network class as follows.}

\begin{definition}[Path Multigraph]\label{def: path multigraph}
A path multigraph is a directed graph whose nodes can be ordered as $v_1, v_2, \ldots, v_N$ such that every path from the origin $v_1$ to the destination $v_N$ traverses the same sequence of nodes, and paths may differ only in the links connecting consecutive nodes.
\end{definition}

{To show this guaranteed improvement, one can constructively show that the condition $x^{H*} \le \tilde{x}^*$ in Lemma \ref{lemma: improvement condition} is always satisfied on path multigraphs by building a baseline equilibrium from the mixed-autonomy equilibrium. We provide this constructive argument in Appendix \ref{appendix: proof of thm 3}. Note that our finding coincides with literature \cite{Brown2020}, but using different proving techniques.}

\subsection{Deteriorations}

The preceding result shows that autonomous agents improve social cost on path multigraphs. However, this is not always the case. It is known from prior work that altruistic or toll-responsive agents can increase social cost on certain networks: Hill and Brown \cite{Hill2023} show that on some series-parallel networks, including any positive fraction of altruistic agents deteriorates social cost, and Brown and Marden \cite{Brown2020} prove that any network-agnostic mechanism that improves social cost on some networks must deteriorate it on others. In this subsection, we complement these worst-case results by deriving a sufficient condition for deterioration on a given network.

First, we introduce some notation. In the baseline system, we denote $\mathcal{V}$ as the set of used paths at equilibrium, i.e., 
\begin{equation*}
    \mathcal{V} = \{ p \in \mathcal{P} \; | \; \tilde{x}_p^* > 0 \}.
\end{equation*}
The number of used paths is then $|\mathcal{V}| = V$. Let $\Delta_{\mathcal{V}} \in \{0,1\}^{L \times V}$ be the link-path incidence matrix restricted to the used paths, i.e., $\Delta_{\mathcal{V}}$ only contains the columns of $\Delta$ corresponding to the paths in $\mathcal{V}$. Let $M_{\mathcal{V}} = \Delta_{\mathcal{V}}^T K \Delta_{\mathcal{V}}$, where $K = \text{diag}(k)$ is a diagonal matrix with parameters $k_a$ on the diagonal. It has the following property:

\begin{lemma}\label{lemma: M invertible}
    $M_{\mathcal{V}}$ is invertible if and only if $\Delta_{\mathcal{V}}$ has linearly independent columns.
\end{lemma}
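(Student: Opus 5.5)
The plan is to exploit the fact that $K=\mathrm{diag}(k)$ has strictly positive diagonal entries, since $k_a>0$ for every link $a$ by Assumption~\ref{assump: BPR}. We can therefore write $K = K^{1/2}K^{1/2}$ with $K^{1/2}=\mathrm{diag}(\sqrt{k_a})$ invertible. This gives
\begin{equation*}
M_{\mathcal{V}} = (K^{1/2}\Delta_{\mathcal{V}})^T (K^{1/2}\Delta_{\mathcal{V}}),
\end{equation*}
so $M_{\mathcal{V}}$ is a Gram matrix. Hence it is symmetric positive semidefinite, and the whole question reduces to identifying its null space.

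The key step is the quadratic form identity. For any $y\in\mathbb{R}^V$,
\begin{equation*}
y^T M_{\mathcal{V}} y = \|K^{1/2}\Delta_{\mathcal{V}} y\|^2 = \sum_{a\in\mathcal{L}} k_a [\Delta_{\mathcal{V}}y]_a^2 .
\end{equation*}
Because each $k_a>0$, this quantity vanishes if and only if $\Delta_{\mathcal{V}}y=\mathbf{0}$. From this we argue both directions.

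\emph{If $\Delta_{\mathcal{V}}$ has independent columns:} take $y$ with $M_{\mathcal{V}}y=\mathbf{0}$. Then $y^TM_{\mathcal{V}}y=0$, so $\Delta_{\mathcal{V}}y=\mathbf{0}$, and independence of the columns forces $y=\mathbf{0}$. Thus $M_{\mathcal{V}}$, being square ($V\times V$) with trivial kernel, is invertible.

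\emph{If the columns are dependent:} there is a nonzero $y$ with $\Delta_{\mathcal{V}}y=\mathbf{0}$. Then $M_{\mathcal{V}}y=\Delta_{\mathcal{V}}^TK\Delta_{\mathcal{V}}y=\mathbf{0}$, so $M_{\mathcal{V}}$ is singular.

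Equivalently, $\ker M_{\mathcal{V}}=\ker\Delta_{\mathcal{V}}$, since $K^{1/2}$ is invertible and $\ker(B^TB)=\ker B$ for any matrix $B$.

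There is no real obstacle here. The only point needing care is that strict positivity of every $k_a$ is essential: if some $k_a$ were zero, the quadratic form could vanish on a nonzero $y$ with $\Delta_{\mathcal{V}}y\neq\mathbf{0}$, and the equivalence would break. Strict positivity is exactly what Assumption~\ref{assump: BPR} guarantees.
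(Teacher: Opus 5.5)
Your proof is correct and is essentially the paper's argument: trivial kernel via the quadratic form $y^T M_{\mathcal{V}} y = \sum_{a} k_a [\Delta_{\mathcal{V}} y]_a^2$ with every $k_a > 0$, and a nonzero null vector of $\Delta_{\mathcal{V}}$ for the converse. The $K^{1/2}$ Gram-matrix framing only repackages the same reasoning.
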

\begin{proof}
    Since $M_{\mathcal{V}}$ is a square matrix, it is invertible if and only if its null space contains only the zero vector. Suppose $\Delta_{\mathcal{V}}$ has linearly independent columns, we need to show that $M_{\mathcal{V}} y = \mathbf{0}$ implies $y = \mathbf{0}$. Note that $M_{\mathcal{V}} y = \Delta_{\mathcal{V}}^T K \Delta_{\mathcal{V}} y = \mathbf{0}$,
    which implies that 
    \begin{equation*}
        y^T \Delta_{\mathcal{V}}^T K \Delta_{\mathcal{V}} y = (\Delta_{\mathcal{V}} y)^T K (\Delta_{\mathcal{V}} y) = 0.
    \end{equation*}
    Let $z = \Delta_{\mathcal{V}} y$, then we obtain
    \begin{equation*}
        z^T K z = \sum_{a \in \mathcal{L}} k_a z_a^2 = 0.
    \end{equation*}
    Since $k_a > 0$ for all $a \in \mathcal{L}$, we have $z_a = 0$ for all $a \in \mathcal{L}$, i.e., $z = \mathbf{0}$. Since $\Delta_{\mathcal{V}}$ has linearly independent columns, we have $y = \mathbf{0}$. 

    On the other hand, suppose $\Delta_{\mathcal{V}}$ has linearly dependent columns, we need to show that there exists a non-zero vector $y$ such that $M_{\mathcal{V}} y = \mathbf{0}$. Let $y$ be a non-trivial solution to $\Delta_{\mathcal{V}} y = \mathbf{0}$. Then, we have
    \begin{equation*}
        M_{\mathcal{V}} y = \Delta_{\mathcal{V}}^T K \Delta_{\mathcal{V}} y = \Delta_{\mathcal{V}}^T K \mathbf{0} = \mathbf{0}.
    \end{equation*}
    Thus, the null space of $M_{\mathcal{V}}$ contains a non-zero vector $y$, and therefore, $M_{\mathcal{V}}$ is not invertible.
\end{proof}

The condition that $\Delta_{\mathcal V}$ has linearly independent columns is purely linear-algebraic. A stronger but more transparent sufficient condition is that each path in $\mathcal V$ contains at least one link that does not belong to any other path in $\mathcal V$.

\begin{theorem}\label{thm: deterioration condition}
    Assume each link travel time $t_a(\cdot)$ is linear, $\Delta_{\mathcal{V}}$ has linearly independent columns, $q = \arg\min_{p \in \mathcal{P}} C_p^A(\tilde{x}^*)$ is unique and satisfies $q \notin \mathcal{V}$, and $C_p^H(\tilde{x}^*) > \tilde{\lambda}^H$ for any $p \notin \mathcal{V}$. 
    Let $e_q$ denote a unit vector with 1 at the $q$-th entry and 0 elsewhere, and define 
    $$\gamma = \frac{\mathbf{1}^T M_{\mathcal V}^{-1} \Delta_{\mathcal{V}}^T K \Delta e_q - 1}{\mathbf{1}^T M_{\mathcal V}^{-1} \mathbf{1}}.$$
    If 
    \begin{align}
        C_q^H(\tilde{x}^*) - \tilde{\lambda}^H + \gamma > 0,
    \end{align}
    then there exists $\alpha_0 > 0$ such that, for all $\alpha \in (0,\alpha_0)$, the social cost deteriorates, i.e., $S^* > \tilde{S}^*$.
\end{theorem}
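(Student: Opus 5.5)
Our strategy is a perturbation analysis around the baseline equilibrium for small $\alpha$. We construct an explicit candidate equilibrium of the mixed-autonomy system and verify it through Theorem~\ref{thm: social cost to path cost}. Theorem~\ref{thm: uniqueness of aggregated link flow} then says its aggregated link flow is \emph{the} equilibrium aggregated flow. Finally, we compute $S^*-\tilde S^*$ to first order in $\alpha$.

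With linear costs, $t_a(f_a)=k_a f_a+b_a$, so $c_a^A=2k_af_a+b_a$. Since $q$ is the unique minimizer of $C_p^A(\tilde x^*)$, for small $\alpha$ we place all autonomous flow on $q$: $x^{A*}=\alpha e_q$. Because $C_p^H(\tilde x^*)>\tilde\lambda^H$ strictly for every $p\notin\mathcal V$, humans should stay on $\mathcal V$. We therefore seek $x^{H*}=\tilde x^*_{\mathcal V}+\delta$, with $\delta$ supported on $\mathcal V$, $\mathbf 1^T\delta=-\alpha$, and equal human costs on $\mathcal V$.

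The human cost vector on $\mathcal V$ changes by $\Delta_{\mathcal V}^TK(\Delta_{\mathcal V}\delta+\alpha\Delta e_q)$. Requiring this to equal $\mu\mathbf 1$ gives
\begin{equation*}
\delta=M_{\mathcal V}^{-1}\bigl(\mu\mathbf 1-\alpha\Delta_{\mathcal V}^TK\Delta e_q\bigr),
\end{equation*}
which is well defined by Lemma~\ref{lemma: M invertible}. Imposing $\mathbf 1^T\delta=-\alpha$ gives
\begin{equation*}
\mu=\alpha\,\frac{\mathbf 1^TM_{\mathcal V}^{-1}\Delta_{\mathcal V}^TK\Delta e_q-1}{\mathbf 1^TM_{\mathcal V}^{-1}\mathbf 1}=\alpha\gamma,
\end{equation*}
so $\lambda^H=\tilde\lambda^H+\alpha\gamma$. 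The denominator is positive since $M_{\mathcal V}$ is positive definite.

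For $\alpha$ small, $\tilde x^*_p>0$ on $\mathcal V$ keeps $x^{H*}\ge 0$. The strict inequalities off $\mathcal V$, and the strict minimality of $q$ for $C^A$, survive an $O(\alpha)$ perturbation by continuity. So this profile satisfies Theorem~\ref{thm: social cost to path cost}, and $\alpha_0$ is the smallest of these finitely many thresholds.

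Next we compute the social cost. We write $S^*=(1-\alpha)\lambda^H+\alpha C_q^H(x^*)$, since all human flow sits on equal-cost paths and all autonomous flow is on $q$. Also $\tilde S^*=\tilde\lambda^H$, and $C_q^H(x^*)=C_q^H(\tilde x^*)+O(\alpha)$. Hence
\begin{equation*}
S^*-\tilde S^*=\alpha\bigl(\gamma-\tilde\lambda^H+C_q^H(\tilde x^*)\bigr)+O(\alpha^2),
\end{equation*}
which is positive for small $\alpha$ by the hypothesis, after possibly shrinking $\alpha_0$. Because costs are linear, the $O(\alpha^2)$ term is an explicit quadratic, so the shrinking can be made concrete if needed.

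The main obstacle is justifying that this constructed profile is the relevant equilibrium. Theorem~\ref{thm: uniqueness of aggregated link flow} resolves it, since any equilibrium has the same $f^*$ and hence the same $S^*$. The remaining care is in the continuity argument, which must keep path flows nonnegative and the inequalities for unused paths strict uniformly for $\alpha<\alpha_0$.
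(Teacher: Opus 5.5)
Your proposal is correct and follows the paper's proof essentially step for step. It uses the same construction $x^{A}=\alpha e_q$ with human flow kept on $\mathcal V$ and re-equalized via $M_{\mathcal V}^{-1}$, the same identity $\lambda^H=\tilde\lambda^H+\alpha\gamma$, the same small-$\alpha$ verification of the equilibrium conditions, and the same first-order comparison of $S^*=(1-\alpha)\lambda^H+\alpha C_q^H$ with $\tilde S^*=\tilde\lambda^H$; the appeal to Theorem~\ref{thm: uniqueness of aggregated link flow} and the positivity of $\mathbf{1}^T M_{\mathcal V}^{-1}\mathbf{1}$ are details the paper leaves implicit.
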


\begin{proof}
Under the stated assumptions, the equilibrium can be explicitly characterized for sufficiently small $\alpha$. This yields a closed-form expression for the perturbed equilibrium social cost, which can then be compared with the baseline value $\tilde S^*$. The full derivation is given in Appendix \ref{appendix: proof of thm 4}.
\end{proof}

The condition in Theorem \ref{thm: deterioration condition} has a natural interpretation. The term $C_q^H(\tilde{x}^*) - \tilde{\lambda}^H$ measures how much more costly path $q$ is than the paths used at the baseline equilibrium from the human agents' perspective. In other words, it quantifies the travel-time sacrifice required when autonomous agents are drawn to path $q$ by the social objective. The term $\gamma$ captures the corresponding change in the equilibrium human travel cost induced by this perturbation. If the sum of these two effects is positive, then the resulting change in total social cost is positive for sufficiently small $\alpha$, and deterioration follows.

\subsection{No Effect}

Finally, we identify a condition under which autonomous agents do not affect the equilibrium social cost.

\begin{theorem}\label{thm: no effect condition}
    If all paths have the same free-flow travel time (travel time with zero traffic), i.e., $\sum_{a \in \mathcal{L}} \Delta_{ap} b_a$ is the same for all $p \in \mathcal{P}$, equivalently $\Delta^T b = b_0 \mathbf{1}$ for some constant $b_0$, then $S^* = \tilde{S}^*$.
\end{theorem}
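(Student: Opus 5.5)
The plan is to construct one mixed-autonomy equilibrium whose aggregated link flow equals the baseline aggregated flow $\tilde f^*$. Theorem~\ref{thm: uniqueness of aggregated link flow} then forces every mixed-autonomy equilibrium to have $f^*=\tilde f^*$. Since the social cost depends only on the aggregated link flow, this gives $S^*=\tilde S^*$.

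\textbf{Step 1: the two path costs are affinely related.} Under Assumption~\ref{assump: BPR} the link costs are
$$c_a^H(f)=k_a f_a^n+b_a, \qquad c_a^A(f)=t_a(f_a)+f_a t_a'(f_a)=(n+1)k_a f_a^n+b_a.$$
Summing along a path and using the hypothesis $\Delta^T b=b_0\mathbf{1}$, define $D_p(f)=\sum_{a}\Delta_{ap}k_a f_a^n\ge 0$. Then
$$C_p^H(f)=D_p(f)+b_0, \qquad C_p^A(f)=(n+1)D_p(f)+b_0.$$
Equivalently, $C_p^A(f)-b_0=(n+1)\bigl(C_p^H(f)-b_0\bigr)$ for every path $p$ and every flow $f$. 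Because $n+1>0$, this map is strictly increasing and affine. Hence, for any fixed aggregated flow, the two classes rank paths identically, and in particular they have the same set of cost-minimizing paths.

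\textbf{Step 2: build a proportional equilibrium.} Take any baseline equilibrium path flow $\tilde x^*$ with human cost $\tilde\lambda^H$. Set
$$x^H=(1-\alpha)\tilde x^*, \qquad x^A=\alpha\tilde x^*.$$
These flows are nonnegative and meet the demands in \eqref{eq: flow meets demand}, and the aggregated path flow is $\tilde x^*$, so the aggregated link flow is $\tilde f^*$.

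We check the conditions of Theorem~\ref{thm: social cost to path cost} with
$$\lambda^H=\tilde\lambda^H, \qquad \lambda^A=(n+1)(\tilde\lambda^H-b_0)+b_0.$$
Any path with $x_p^H>0$ or $x_p^A>0$ has $\tilde x_p^*>0$, so $C_p^H(\tilde f^*)=\tilde\lambda^H$ and, by Step 1, $C_p^A(\tilde f^*)=\lambda^A$. Every other path satisfies $C_p^H(\tilde f^*)\ge\tilde\lambda^H$ by the baseline Wardrop condition. Applying the increasing affine map from Step 1 then gives $C_p^A(\tilde f^*)\ge\lambda^A$. The edge cases $\alpha=0$ and $\alpha=1$ are covered by the same argument, since an empty class imposes no condition.

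\textbf{Step 3: conclude by uniqueness.} By Theorem~\ref{thm: social cost to path cost}, the pattern from Step 2 is a mixed-autonomy equilibrium with aggregated link flow $\tilde f^*$. Theorem~\ref{thm: uniqueness of aggregated link flow} then gives $f^*=\tilde f^*$ for every mixed-autonomy equilibrium. Therefore
$$S^*=\sum_a f_a^* t_a(f_a^*)=\tilde S^*.$$

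There is no real obstacle. The only point needing care is Step 1: the identity $C^A_p-b_0=(n+1)(C^H_p-b_0)$ holds only because the constants $b_a$ sum to the same $b_0$ on every path. Without that hypothesis, the free-flow terms would break the proportionality and the two classes could rank paths differently.
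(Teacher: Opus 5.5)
Your proof is correct and follows the paper's route. You use the positive affine relation $C_p^A-b_0=(n+1)\bigl(C_p^H-b_0\bigr)$, which holds because $\Delta^T b=b_0\mathbf{1}$, and then invoke uniqueness of the aggregated link flow from Theorem~\ref{thm: uniqueness of aggregated link flow}. Your proportional split $x^H=(1-\alpha)\tilde x^*$, $x^A=\alpha\tilde x^*$ simply makes explicit the paper's informal claim that the mixed equilibrium conditions reduce to the single-class baseline Wardrop problem.
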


\begin{proof}
Under this condition, the autonomous agent cost $C_p^A(f) = (n+1)\sum_{a \in \mathcal{L}} \Delta_{ap} k_a f_a^n + b_0$ is a positive affine transformation of the human agent cost $C_p^H(f) = \sum_{a \in \mathcal{L}} \Delta_{ap} k_a f_a^n + b_0$. Both agent types therefore rank paths identically, and the mixed-autonomy equilibrium has the same aggregated link flow as the baseline system. Hence, the resulting social cost is unchanged. For details, see Appendix \ref{appendix: proof of thm 5}.
\end{proof}

{Note that the condition in Theorem \ref{thm: no effect condition} is driven by the homogeneity of path free-flow travel times, rather than by any particular network topology. This indicates that, whenever all paths have the same free-flow travel time, regardless of how complicated the network is, including autonomous agents is neutral in terms of system efficiency: it neither improves nor worsens the equilibrium social cost.}

\section{Numerical Experiments}\label{sec: numerical experiments}

In this section, we present a numerical example to illustrate how the equilibrium changes with the penetration rate of autonomous agents. We consider a modified Braess's network \cite{Braess}, shown in Figure \ref{fig: braess network}. Node S is the origin and node T is the destination. The network contains two intermediate nodes, A and B. In addition to the five standard links in the Braess's network, we add a direct link from S to T.

The links are labeled as follows: link 1 connects S and A, link 2 connects A and T, link 3 connects S and B, link 4 connects B and T, link 5 connects A and B, and link 6 connects S and T directly. Path 1 contains links 1 and 2, path 2 contains links 3 and 4, path 3 contains links 1, 5, and 4, path 4 contains links 3, 5, and 2, and path 5 contains link 6 only.

We assume the travel time functions on link $a$ are given by $t_a(f_a) = k_a f_a + b_a$, where the parameters are listed in Table \ref{table: link parameters}.

For each value of $\alpha \in [0,1]$, we compute an equilibrium using the relaxation method described in Section \ref{sec: equilibrium analysis}: we fix the autonomous flow and solve the human subproblem, then fix the human flow and solve the autonomous subproblem, and iterate until convergence. In our experiments, the algorithm converges reliably for all tested values of $\alpha$.

% \vspace{-6mm}
\begin{figure}[ht]
    \centering
    \begin{minipage}{0.56\linewidth}
        \centering
        \begin{tikzpicture}
            \node[shape=circle,draw=black] (S) at (0,0) {S};
            \node[shape=circle,draw=black] (A) at (2,0.8) {A};
            \node[shape=circle,draw=black] (B) at (2,-0.8) {B};
            \node[shape=circle,draw=black] (T) at (4,0) {T};
        
            \path [->] (S) edge node[above] {1} (A);
            \path [->] (S) edge node[below] {3} (B);
            \path [->] (A) edge node[above] {2} (T);
            \path [->] (B) edge node[below] {4} (T);
            \path [->] (A) edge node[right] {5} (B);
            \path [->] (S) edge [bend right=80] node[below] {6} (T);
        \end{tikzpicture}
        \vspace{-2mm}
        \caption{A modified Braess's network.}
        \label{fig: braess network}
    \end{minipage}%
    \begin{minipage}{0.4\linewidth}
        \centering
        \par\vspace{10pt}
        % Change caption type to table for this minipage
        \makeatletter\def\@captype{table}\makeatother
        \begin{tabular}{c|c|c}
            \hline
            Link & $k_a$ & $b_a$ \\
            \hline
            1 & 10 & 1 \\
            2 & 5 & 8 \\
            3 & 2 & 7 \\
            4 & 5 & 1 \\
            5 & 3 & 1 \\
            6 & 7 & 11 \\
            \hline
        \end{tabular}
        \caption{Travel time parameters.}
        \label{table: link parameters}
    \end{minipage}
\end{figure}
% \vspace{-6mm}

\vspace{-6mm}
\begin{figure}[ht]
    \centering
    \subfigure[Social cost]{
        \begin{minipage}[t]{0.5\linewidth}
        \centering
        \includegraphics[width=\linewidth]{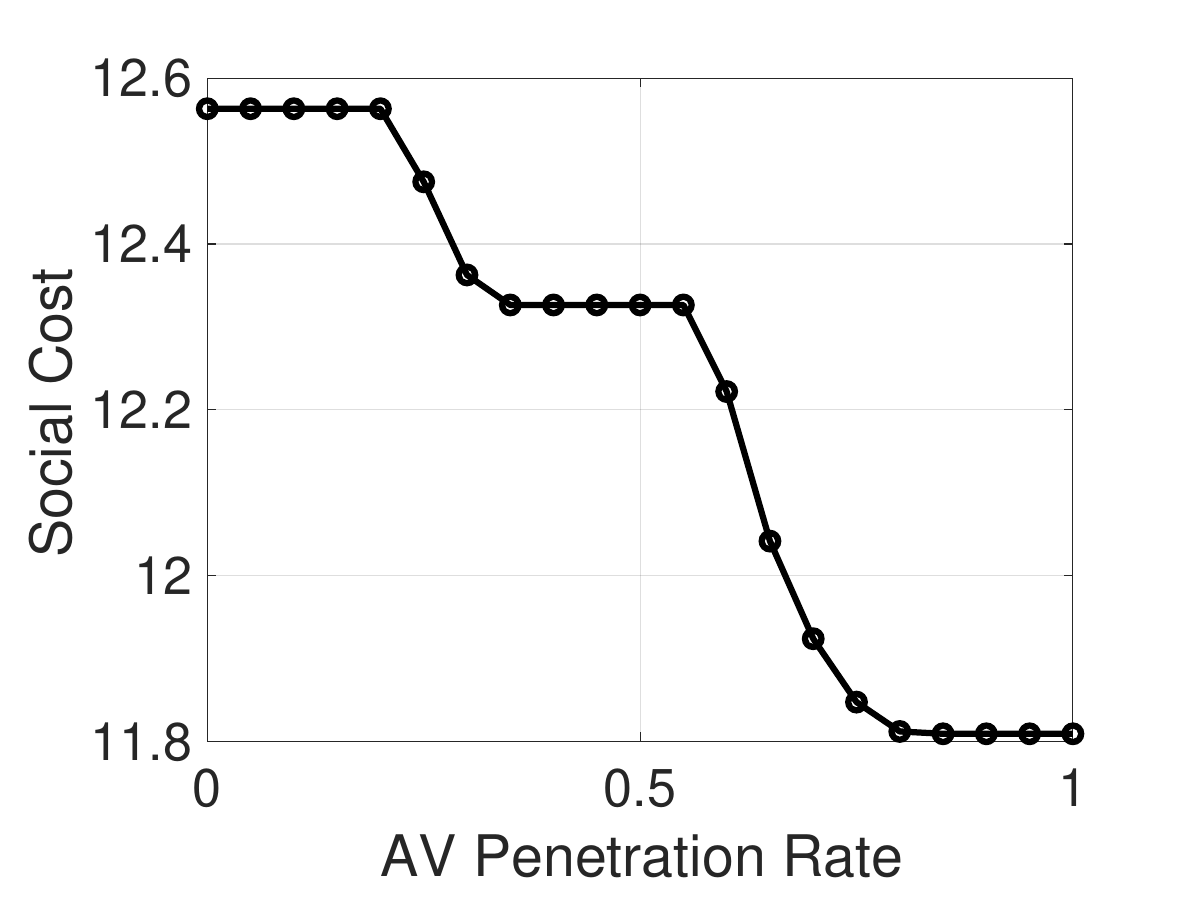}
    \end{minipage}
    }%
    \subfigure[Aggregated path flow]{
    \begin{minipage}[t]{0.5\linewidth}
        \centering
        \includegraphics[width=\linewidth]{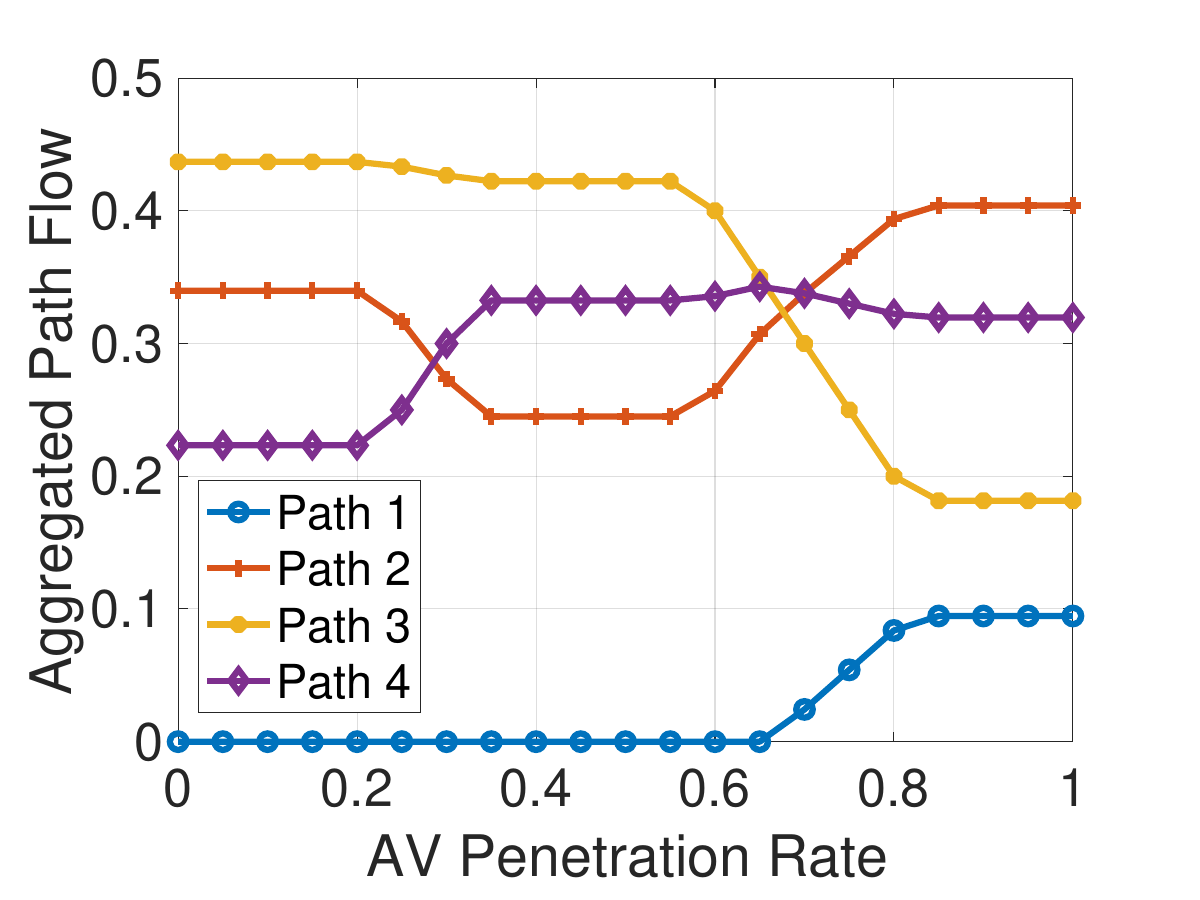}
    \end{minipage}
    }
    \vspace{-4mm}
    \caption{Social cost and aggregated path flow vs. fraction of autonomous agents.}
    \label{fig: better off}
\end{figure}
\vspace{-3mm}

Figure \ref{fig: better off}(a) shows the equilibrium social cost as a function of the autonomous penetration rate $\alpha$, while Figure \ref{fig: better off}(b) shows the equilibrium aggregated path flow. In this example, the equilibrium social cost decreases overall as $\alpha$ increases, indicating that autonomous agents improve system performance on this instance.

Several additional observations are worth noting. First, the social cost is piecewise constant over certain intervals of $\alpha$. On these intervals, the aggregated equilibrium flow remains unchanged even though the decomposition into human and autonomous flows may vary. Second, as $\alpha$ increases, Path 1, which is unused in the all-human baseline, becomes active. This shows that altruistic autonomous agents may induce the system to use routes that self-interested users alone would not select.

This result aligns with intuition: with more autonomous agents aiming to minimize the social cost, the overall system performance improves. However, as established by \cite{Brown2020} and \cite{Hill2023}, this is not always the case. While those works demonstrate the existence of deterioration, we next analyze how the social cost evolves with the fraction of autonomous agents on our network. We change the travel time function of link 6 to $t_6(f_6) = f_6 + 18.3$. Figure \ref{fig: worse off} shows the equilibrium social cost as a function of $\alpha$ for small $\alpha$. The social cost increases as $\alpha$ grows from $0$ to around $0.05$, exhibiting non-monotone behavior. This example confirms the deterioration predicted by Theorem \ref{thm: deterioration condition}. Although the VI formulation and the frameworks in \cite{Brown2020} are generally hard to solve in closed form for a given instance, the algebraic sufficient condition in Theorem \ref{thm: deterioration condition} is easy to verify and serves as a practical check for predicting deterioration at small penetration rates.

\vspace{-5mm}
\begin{figure}[ht]
    \centering
    \includegraphics[width=0.5\linewidth]{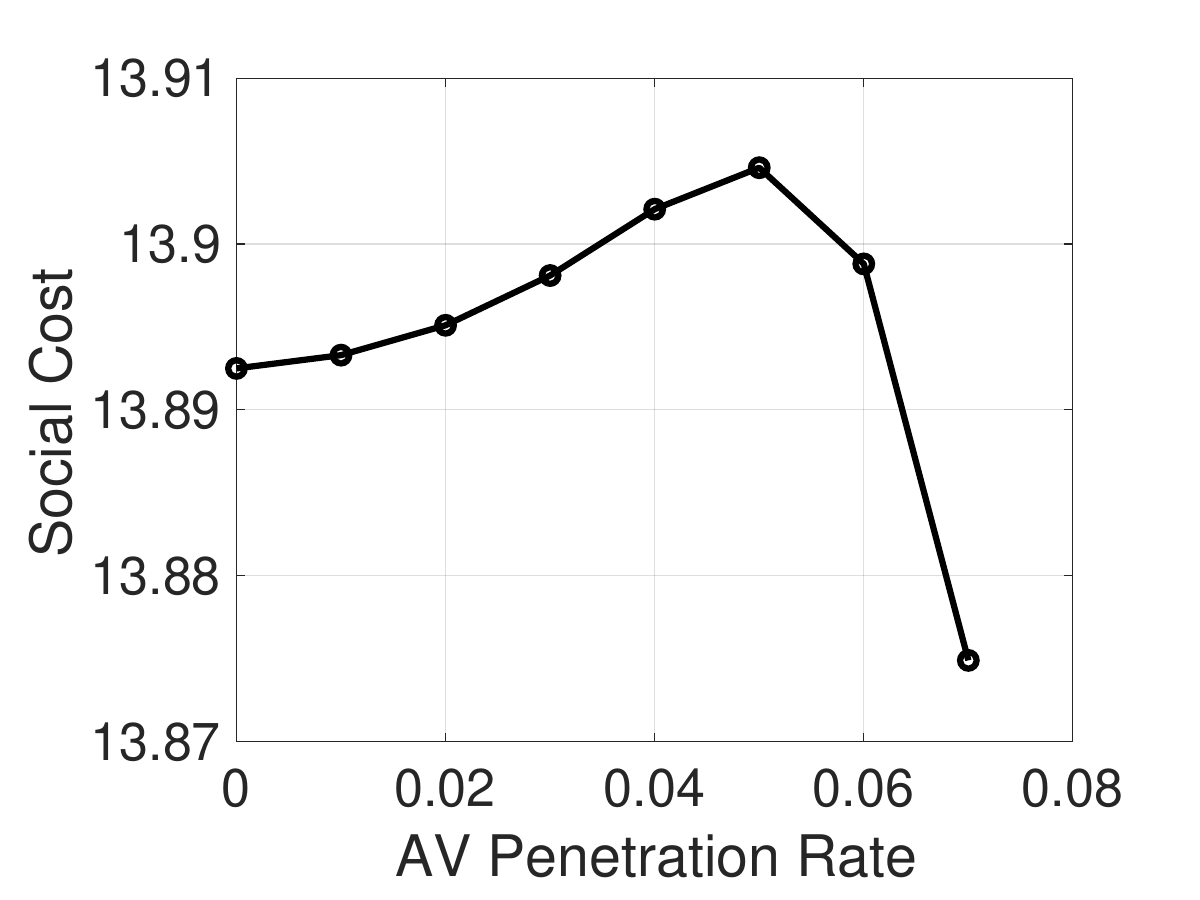}
    \vspace{-4mm}
    \caption{Social cost vs. fraction of autonomous agents (deterioration case, $b_6 = 18.3$).}
    \label{fig: worse off}
\end{figure}
\vspace{-4mm}

Then, we fix a small amount of autonomous agents, i.e., $\alpha=0.02$, and change the travel time function of link 6. Specifically, we vary $k_a$ while keeping $b_6 = 18.3$, and vary $b_6$ while keeping $k_6 = 1$. Figure \ref{fig: changing k} and \ref{fig: changing b} show the equilibrium social cost as a function of $k_6$ and $b_6$, respectively. In these cases, the equilibrium uses the same set of active paths for both agent types. From the figures, we see that as $k_6$ and $b_6$ increase, the social cost also increases, because routing autonomous agents onto link 6 becomes more costly.

\vspace{-6mm}
\begin{figure}[ht]
    \centering
    \subfigure[Changing $k_6$]{
        \begin{minipage}[t]{0.5\linewidth}
        \centering
        \includegraphics[width=\linewidth]{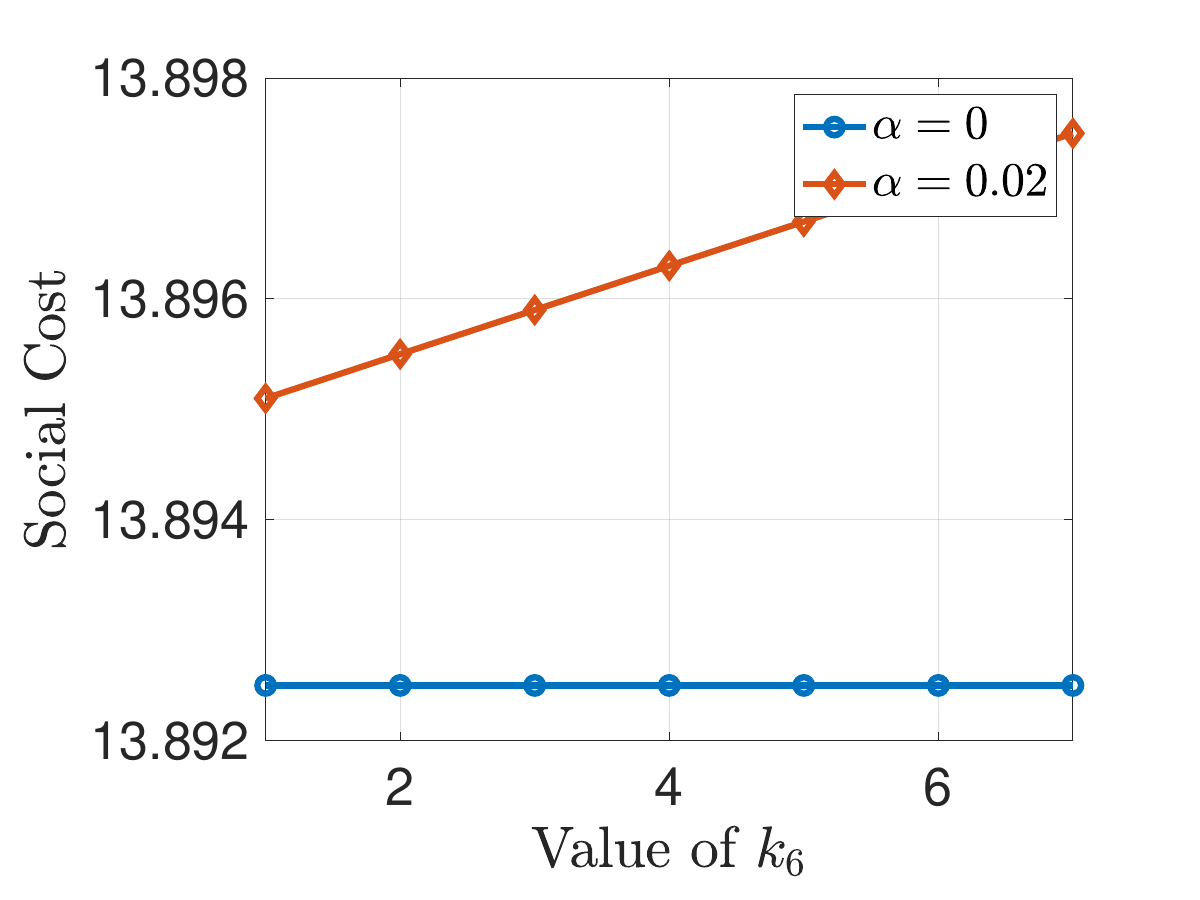}
        \label{fig: changing k}
        \vspace{-13mm}
    \end{minipage}
    }%
    \subfigure[Changing $b_6$]{
    \begin{minipage}[t]{0.5\linewidth}
        \centering
        \includegraphics[width=\linewidth]{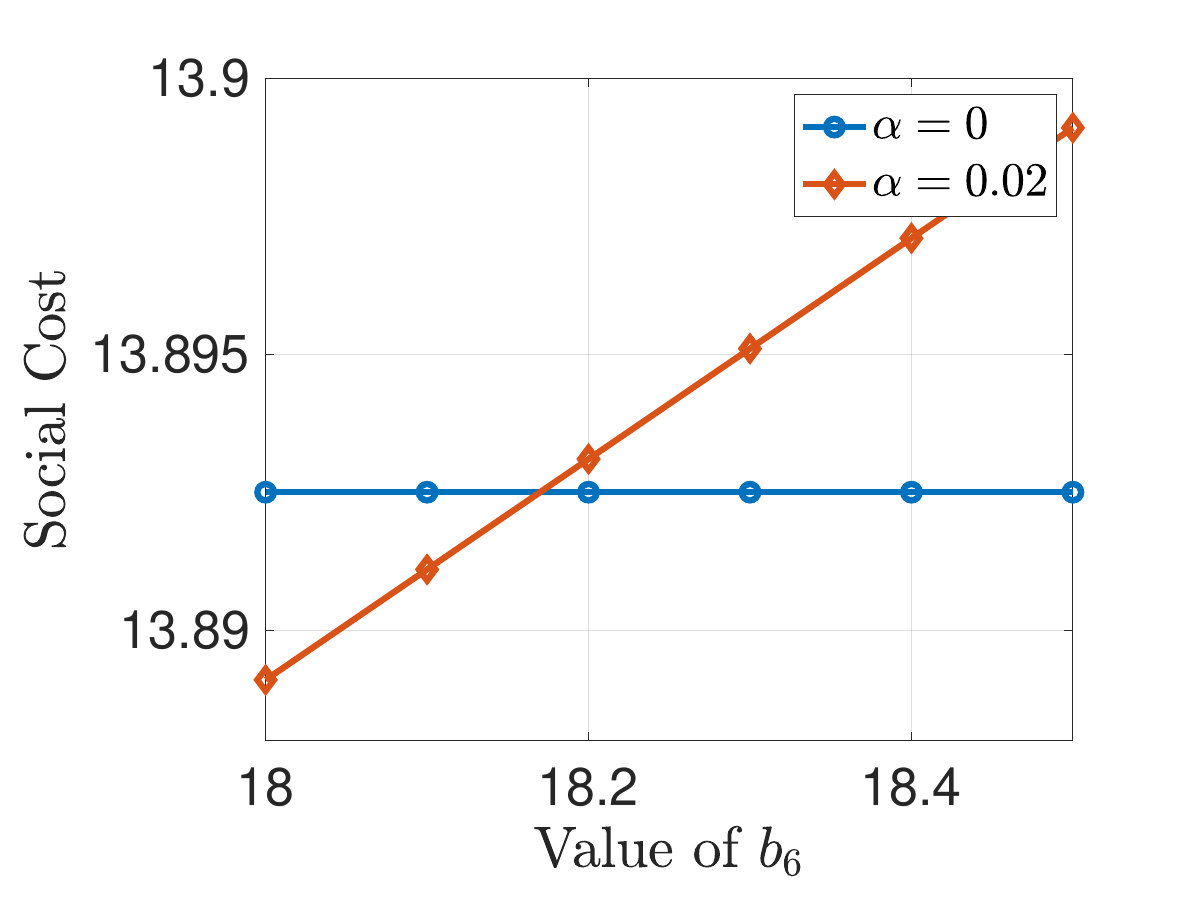}
        \label{fig: changing b}
        \vspace{-13mm}
    \end{minipage}
    }
        \vspace{-4mm}
    \caption{Social cost vs. travel time function parameters of link 6.}
\end{figure}
\vspace{-5mm}

\section{Discussion on Centralized Autonomous Routing}\label{sec: centralized}

In the previous sections, we study the decentralized setting, in which autonomous agents independently choose routes so as to minimize the social cost. A natural question is whether a centralized controller that routes the entire autonomous flow jointly can achieve a different, possibly better, equilibrium outcome.

In the centralized scenario, a central controller routes all autonomous agents to minimize the social cost, while the human agents remain decentralized and self-interested, choosing paths to minimize their own travel times. For a fixed human flow $f^H$, the centralized autonomous-routing problem is
\begin{align*}\tag{P}
    \min_{f^A} \qquad &S(f^{H*}, f^A) \\
    \text{s.t.} \qquad &\sum_{p \in \mathcal{P}} x_p^A = \alpha, \; f^A = \Delta x^A, \; x^A \geq \mathbf{0}.
\end{align*}
Thus, the overall interaction is still a simultaneous game between self-interested human agents and the central controller of autonomous agents. 

An equilibrium in the centralized setting is a feasible pair $(f^{H*},f^{A*})$ such that:

\begin{enumerate}
    \item Given autonomous agents' best response $f^{A*}$, human agents' best response $f^{H*}$ satisfies Wardrop's first principle:
    \begin{align*}
        C_p^H(f^{H*}, f^{A*}) &\begin{cases}= \lambda^H, & \text{if } x_p^{H*} > 0, \\ \geq \lambda^H, & \text{if } x_p^{H*} = 0;\end{cases}
    \end{align*}
    \item Given human agents' best response $f^{H*}$, autonomous agents' best response $f^{A*}$ solves problem (P).
\end{enumerate}

The following theorem shows that, under convex travel times, the centralized and decentralized formulations yield exactly the same equilibrium outcome.

\begin{theorem}\label{thm: centralized equals decentralized}
Assume that each link travel time function $t_a(\cdot)$ is convex, the equilibrium link flow pattern $(f^{H*}, f^{A*})$ in the decentralized scenario is also an equilibrium link flow pattern in the centralized scenario, and vice versa.
\end{theorem}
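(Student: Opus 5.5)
Both equilibrium notions share the same human condition (Wardrop's first principle with respect to $C^H$), so the task reduces to showing that, for a fixed human flow $f^{H*}$, a feasible autonomous flow $f^{A*}$ solves problem (P) if and only if it satisfies the autonomous Wardrop-like condition of Theorem \ref{thm: social cost to path cost}. I would prove this equivalence and then read off both directions of the theorem.

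The first step is to show that (P) is a convex program in $x^A$. Its objective is $S(f^{H*}, f^A)=\sum_{a} g_a(f_a^{H*}+f_a^A)$ with $g_a(y)=y\,t_a(y)$, composed with the linear map $x^A\mapsto \Delta x^A$. On $y\ge 0$ we have $g_a'(y)=t_a(y)+y\,t_a'(y)$. Nonnegativity, monotonicity, and convexity of $t_a$ make $g_a'$ nondecreasing: $t_a$ is nondecreasing, and $y\,t_a'(y)$ is a product of nonnegative nondecreasing functions, since $t_a'\ge 0$ is nondecreasing by convexity. Strictly speaking $t_a$ is only $C^1$, so I would phrase this via monotonicity of the derivative rather than second derivatives. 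Hence each $g_a$ is convex on $\mathbb{R}_+$, and the objective is a convex $C^1$ function of $x^A$ over the simplex $\{x^A\ge 0,\ \mathbf{1}^Tx^A=\alpha\}$. The feasible set is compact, nonempty, and polyhedral.

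The second step is to write the optimality conditions. Because the constraints are linear, the KKT conditions are necessary and, by convexity, also sufficient. The gradient of the objective with respect to $x_p^A$ is $\sum_a \Delta_{ap}\, g_a'(f_a^*) = C_p^A(f^{H*},f^{A*})$. With a multiplier $\lambda^A$ for the demand constraint and $\mu_p\ge 0$ for nonnegativity, KKT states $C_p^A=\lambda^A+\mu_p$ and $\mu_p x_p^A=0$. This is exactly the condition that $C_p^A=\lambda^A$ on paths used by autonomous flow and $C_p^A\ge\lambda^A$ on unused paths. Equivalently, one can use the VI characterization of convex minimization from Section \ref{sec: prelim}: $x^{A*}$ minimizes (P) iff $\nabla S\cdot(x^A-x^{A*})\ge 0$ for all feasible $x^A$, which is the autonomous block of the VI in Lemma \ref{lemma: VI formulation} with $f^H$ fixed.

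The third step assembles the two directions. Suppose $(f^{H*},f^{A*})$ is a decentralized equilibrium. Theorem \ref{thm: social cost to path cost} gives a path decomposition satisfying both Wardrop-like conditions. The autonomous condition is the KKT system, so $f^{A*}$ solves (P) given $f^{H*}$, and the human condition is unchanged. Hence it is a centralized equilibrium. Conversely, if $f^{A*}$ solves (P), KKT necessity yields the autonomous Wardrop-like condition, which together with the human condition gives a decentralized equilibrium by Theorem \ref{thm: social cost to path cost}.

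The main obstacle is keeping the path flows consistent across the two conditions. Both characterizations are stated for a particular path decomposition $(x^{H*},x^{A*})$, while (P) optimizes over $x^A$. I would handle this by noting that (P) depends on $x^A$ only through $f^A$. Any $x^{A*}$ inducing the optimal $f^{A*}$ is therefore optimal and satisfies KKT, and the human decomposition $x^{H*}$ can be kept fixed throughout, so the same pair $(x^{H*},x^{A*})$ witnesses both notions. The convexity verification is the only place the hypothesis on $t_a$ enters. Without it, KKT would still be necessary, so every centralized equilibrium would still be decentralized, but a stationary point need not be a minimizer of (P).
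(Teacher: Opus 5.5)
Your proposal is correct and follows essentially the same route as the paper. Both reduce the theorem to showing that, for fixed $f^{H*}$, $f^{A*}$ solves (P) if and only if it satisfies the first-order (VI/Wardrop-like) condition: necessity comes from optimality over the convex, linearly constrained feasible set, and sufficiency from convexity of $S(f^{H*},\cdot)$. The paper phrases this as the link-flow VI on $\mathcal{K}^A$ via Propositions 1.2 and 1.3 of Nagurney rather than path-flow KKT multipliers, and your explicit check that $y\,t_a(y)$ is convex and your remark that the centralized-to-decentralized direction needs no convexity are sound refinements of the same argument.
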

\begin{proof}
    The human agents' best response is characterized identically in both scenarios. For the autonomous agents, we show that the solution to problem (P) is equivalent to solving the VI problem in Lemma \ref{lemma: VI formulation} with respect to $f^A$, given $f^{H*}$. For details, see Appendix \ref{appendix: proof of thm 6}.
\end{proof}
Therefore, under convex link travel times, a central planner cannot obtain a better equilibrium outcome merely by coordinating the entire autonomous fleet, although it is worth noting that the result only concerns equilibrium outcomes.

\section{Conclusion}\label{sec: conclusion}

This paper studies the traffic assignment problem in a mixed-autonomy setting where self-interested human drivers coexist with altruistic autonomous agents. By formulating the equilibrium as a variational inequality, we establish the existence of equilibrium under general cost functions and the uniqueness of the aggregated link flow and social cost under BPR travel time functions. We derive sufficient conditions on the network and cost structures that determine whether including autonomous agents improves, deteriorates, or has no effect on social cost. We further show that decentralized altruistic routing and centralized fleet management yield the same equilibrium outcome, under convex travel times. As a future direction, it would be interesting to characterize how the social cost evolves over the full range of $\alpha$.

\bibliographystyle{IEEEtran}
\bibliography{refs}

\clearpage
\appendices
\section{Proof of Theorem \ref{thm: social cost to path cost}}\label{appendix: proof of thm 1}
Here, we prove the general case with multiple O/D pairs. The proof for the single O/D pair case follows directly. For this purpose, we need to first refine our notations. Let $\mathcal{J}$ denote the set of O/D pairs in the network. For each O/D pair $w \in \mathcal{J}$, let $\mathcal{P}_w$ denote the set of paths connecting the O/D pair $w$. Let $\mathcal{P} = \bigcup_{w \in \mathcal{J}} \mathcal{P}_w$ denote the set of all paths in the network. The demand for each O/D pair $w \in \mathcal{J}$ is denoted by $d_w$. The fraction of autonomous agents for O/D pair $w$ is denoted by $\alpha_w \in [0,1]$, and thus the demand of human and autonomous agents for O/D pair $w$ are $(1-\alpha_w)d_w$ and $\alpha_w d_w$, respectively. The link flows and path flows are defined the same as before, except that now we have path flows $x_p^H$ and $x_p^A$ for all $p \in \mathcal{P}_w$ and all $w \in \mathcal{J}$. The relation between link flows and path flows is still given by \eqref{eq: link-path relation}. The flow needs to meet the demand of all O/D pairs, which leads to the following constraints:
\begin{equation}\label{eq: flow meets demand multi O/D}
\sum_{p \in \mathcal{P}_w} x_p^H = (1-\alpha_w)d_w, \; \sum_{p \in \mathcal{P}_w} x_p^A = \alpha_w d_w, \quad \forall w \in \mathcal{J}.
\end{equation}
The definition of equilibrium is also refined accordingly. 
\begin{definition}[Equilibrium - Multiple O/D pairs]
A link flow pattern $(f^{H*}, f^{A*}) \in \mathcal{K}$ is an equilibrium if no user has any incentive to alter their path. Mathematically, for every O/D pair $w \in \mathcal{J}$ and every path $p \in \mathcal{P}_w$:
\begin{enumerate}
    \item If $x_p^{H*} > 0$, consider an alternative pattern $(f^{H'}, f^{A*})$ such that $x_p^{H'} = x_p^{H*}-\epsilon$ and $x_q^{H'} = x_q^{H*}+\epsilon$ with $\epsilon \to 0$ for any $q \in \mathcal{P}_w \setminus \{p\}$, and $x_r^{H'} = x_r^{H*}$ for all $r \in \mathcal{P} \setminus \{p,q\}$, there holds $C_q^H(f^{H'},f^{A*}) \ge C_p^H(f^{H*},f^{A*})$.
    \item If $x_p^{A*} > 0$, consider an alternative pattern $(f^{H*}, f^{A'})$ such that $x_p^{A'} = x_p^{A*}-\epsilon$ and $x_q^{A'} = x_q^{A*}+\epsilon$ with $\epsilon \to 0$ for any $q \in \mathcal{P}_w \setminus \{p\}$, and $x_r^{A'} = x_r^{A*}$ for all $r \in \mathcal{P} \setminus \{p,q\}$, there holds $S(f^{H*},f^{A'}) \ge S(f^{H*},f^{A*})$.
\end{enumerate}
\end{definition}

For human agents, the equilibrium condition is still characterized by Wardrop's first principle, which must hold for every O/D pair $w \in \mathcal{J}$ and every path $p \in \mathcal{P}_w$:
\begin{align*}
    C_p^H(f^{H*}, f^{A*}) &\begin{cases}= \lambda_w^H, & \text{if } x_p^{H*} > 0, \\ \geq \lambda_w^H, & \text{if } x_p^{H*} = 0,\end{cases}
\end{align*}
where $(f^{H*}, f^{A*})$ is the equilibrium link flow pattern, if it exists, and $\lambda_w^H$ is the equilibrium cost for human agents for O/D pair $w$.

Now, we focus on the second condition regarding autonomous agents. We show that it can be equivalently characterized by a path-based condition similar to Wardrop's principle.

At the equilibrium, no autonomous agent can reduce their objective, i.e.,
the social cost, by unilaterally changing their path. For any O/D pair $w \in \mathcal{J}$ and any path $p \in \mathcal{P}_w$ with $x_p^{A*} > 0$, consider an alternative pattern $(f^{H*}, f^{A'})$ such that $x_p^{A'} = x_p^{A*}-\epsilon$ and $x_q^{A'} = x_q^{A*}+\epsilon$ with $\epsilon \to 0$ for any $q \in \mathcal{P}_w \setminus \{p\}$, and $x_r^{A'} = x_r^{A*}$ for all $r \in \mathcal{P} \setminus \{p,q\}$. The corresponding link flow pattern is $(f^{H*}, f^{A'})$, where $f_a^{A'} = f_a^{A*} - \epsilon \Delta_{ap} + \epsilon \Delta_{aq}$ for all $a \in \mathcal{L}$. Denote $f_a^* = f_a^{H*}+f_a^{A*}$. The change in social cost is given by
\begin{align*}
    &S(f^{H*}, f^{A'}) - S(f^{H*}, f^{A*}) \\
    &= \sum_{a \in \mathcal{L}} \Delta_{ap}(1-\Delta_{aq}) \big[ (f_a^* - \epsilon) t_a(f_a^* - \epsilon) - f_a^* t_a(f_a^*) \big]\\
    &\quad + \sum_{a \in \mathcal{L}} \Delta_{aq}(1-\Delta_{ap}) \big[ (f_a^* + \epsilon) t_a(f_a^* + \epsilon) - f_a^* t_a(f_a^*) \big].
\end{align*}
This is because only the links on path $p$ or $q$ are affected by the change in path flow. Further, the links that are on both paths $p$ and $q$ (i.e., $\Delta_{ap} = 1$ and $\Delta_{aq} = 1$) are not affected either, since the decrease in flow on path $p$ is exactly offset by the increase in flow on path $q$.

Since the equilibrium condition requires $S(f^{H*}, f^{A'}) - S(f^{H*}, f^{A*}) \geq 0$, we have 
\begin{align*}
    &\sum_{a \in \mathcal{L}} \Delta_{aq}(1-\Delta_{ap}) \big[ (f_a^* + \epsilon) t_a(f_a^* + \epsilon) - f_a^* t_a(f_a^*) \big]\\
    \ge &\sum_{a \in \mathcal{L}} \Delta_{ap}(1-\Delta_{aq}) \big[f_a^* t_a(f_a^*) - (f_a^* - \epsilon) t_a(f_a^* - \epsilon)\big].
\end{align*}
Dividing both sides by $\epsilon$ and taking the limit as $\epsilon \to 0$, since $t_a(\cdot)$ is differentiable, we have
\begin{align*}
    &\sum_{a \in \mathcal{L}} \Delta_{aq}(1-\Delta_{ap}) \frac{d (f_a t_a(f_a))}{d f_a} \bigg|_{f_a = f_a^*} \\
    \ge &\sum_{a \in \mathcal{L}} \Delta_{ap}(1-\Delta_{aq}) \frac{d (f_a t_a(f_a))}{d f_a} \bigg|_{f_a = f_a^*}.
\end{align*}
Rearranging the terms, we have
\begin{equation*}
    \sum_{a \in \mathcal{L}} \Delta_{aq} \frac{d (f_a t_a(f_a))}{d f_a} \bigg|_{f_a = f_a^*} \ge \sum_{a \in \mathcal{L}} \Delta_{ap} \frac{d (f_a t_a(f_a))}{d f_a} \bigg|_{f_a = f_a^*},
\end{equation*}
which is equivalent to $C_q^A(f^*) \ge C_p^A(f^*)$. This needs to hold for any $q \in \mathcal{P}_w \setminus \{p\}$. Thus, we have shown that if $x_p^{A*} > 0$, then $C_p^A(f^*) \le C_q^A(f^*)$ for all $q \in \mathcal{P}_w$. The statement holds for all $w \in \mathcal{J}$. Therefore, the equilibrium condition for autonomous agents can be characterized as follows: for every O/D pair $w \in \mathcal{J}$ and every path $p \in \mathcal{P}_w$,
\begin{align*}
    C_p^A(f^{H*}, f^{A*}) &\begin{cases}= \lambda_w^A, & \text{if } x_p^{A*} > 0, \\ \geq \lambda_w^A, & \text{if } x_p^{A*} = 0,\end{cases}
\end{align*}
where $\lambda_w^A$ is the equilibrium cost for autonomous agents for O/D pair $w$. The converse follows by reversing the inequalities: if the Wardrop-like condition holds, then $C_q^A(f^*) \ge C_p^A(f^*)$ implies the directional derivative of $S$ is non-negative for any feasible deviation, so no agent can reduce the social cost. \qed

\section{Proof of Theorem \ref{thm: uniqueness of aggregated link flow}}\label{appendix: proof of thm 2}
We prove the uniqueness of the aggregated link flow by showing that the VI problem \eqref{eq: VI} has a unique solution in terms of the aggregated flow $f = f^H + f^A$. 

Assume the contrary that there exist two distinct equilibrium link flow patterns $\Bar{f}^*= \begin{bmatrix} f^{H*} \\ f^{A*} \end{bmatrix}$ and $\Bar{f}^{\dagger}= \begin{bmatrix} f^{H\dagger} \\ f^{A\dagger} \end{bmatrix}$ such that the aggregated link flows are different, i.e., $f^* = f^{H*} + f^{A*} \neq f^\dagger = f^{H\dagger} + f^{A\dagger}$ for at least one link $a \in \mathcal{L}$. By the VI formulation in Lemma \ref{lemma: VI formulation}, we have
\begin{equation}\label{eq: VI 1}
c(\Bar{f}^*) \cdot (\Bar{f} - \Bar{f}^*) \geq 0, \quad \forall (f^H, f^A) \in \mathcal{K},
\end{equation}
and
\begin{equation}\label{eq: VI 2}
c(\Bar{f}^\dagger) \cdot (\Bar{f} - \Bar{f}^\dagger) \geq 0, \quad \forall (f^H, f^A) \in \mathcal{K},
\end{equation}
where $\Bar{f} = \begin{bmatrix} f^H \\ f^A \end{bmatrix}$. Substituting $\Bar{f} = \Bar{f}^\dagger$ into \eqref{eq: VI 1} and $\Bar{f} = \Bar{f}^*$ into \eqref{eq: VI 2}, we have
\begin{equation}\label{eq: VI 3}
c(\Bar{f}^*) \cdot (\Bar{f}^\dagger - \Bar{f}^*) \geq 0,
\end{equation}
and
\begin{equation}\label{eq: VI 4}
c(\Bar{f}^\dagger) \cdot (\Bar{f}^* - \Bar{f}^\dagger) \geq 0.
\end{equation}
Adding \eqref{eq: VI 3} and \eqref{eq: VI 4}, we have
\begin{equation}\label{eq: VI 5}
\big( c(\Bar{f}^*) - c(\Bar{f}^\dagger) \big) \cdot (\Bar{f}^\dagger - \Bar{f}^*) \ge 0.
\end{equation}
Denote the left-hand side of \eqref{eq: VI 5} by $LHS$. By the definition of $c(\cdot)$, we have
\begin{align*}
    LHS &= \sum_{a \in \mathcal{L}} \big(c_a^H(f^*) - c_a^H(f^\dagger)\big) \big(f_a^{H\dagger} - f_a^{H*}\big) \\
    &\quad + \sum_{a \in \mathcal{L}} \big(c_a^A(f^*) - c_a^A(f^\dagger)\big) \big(f_a^{A\dagger} - f_a^{A*}\big).
\end{align*}
Note that $c_a^H(f) = t_a(f_a)$ and $c_a^A(f) = t_a(f_a) + f_a t_a'(f_a) = c_a^H(f) + f_a t_a'(f_a)$, where $t_a'(f_a)$ is the derivative of $t_a(f_a)$. Thus, we have
\begin{align*}
    LHS &= \sum_{a \in \mathcal{L}} \big(c_a^A(f^*) - c_a^A(f^\dagger)\big) \big(f_a^{A\dagger} - f_a^{A*}\big)\\
    &\quad + \sum_{a \in \mathcal{L}} \big(c_a^A(f^*) - c_a^A(f^\dagger)\big) \big(f_a^{H\dagger} - f_a^{H*}\big)\\
    &\quad - \sum_{a \in \mathcal{L}} \big(f_a^* t_a'(f_a^*) - f_a^\dagger t_a'(f_a^\dagger)\big) \big(f_a^{H\dagger} - f_a^{H*}\big),
\end{align*}
which can be rearranged as
\begin{align*}
    LHS &= \sum_{a \in \mathcal{L}} \big(c_a^A(f^*) - c_a^A(f^\dagger)\big) \big(f_a^{\dagger} - f_a^{*}\big)\\
    &\quad - \sum_{a \in \mathcal{L}} \big(f_a^* t_a'(f_a^*) - f_a^\dagger t_a'(f_a^\dagger)\big) \big(f_a^{H\dagger} - f_a^{H*}\big).
\end{align*}
We analyze the two terms separately. For the first term, since $c_a^A(f) = t_a(f_a) + f_a t_a'(f_a)$ and $t_a(\cdot)$ is of the form $t_a(f_a) = k_a f_a^{n} + b_a$ with $k_a > 0$, $n \ge 1$, and $b_a \ge 0$, $c_a^A(f)$ is strictly increasing in $f_a$. Thus, for any link $a$ with $f_a^* \neq f_a^\dagger$, we have $\big(c_a^A(f^*) - c_a^A(f^\dagger)\big) \big(f_a^{\dagger} - f_a^{*}\big) < 0$. For any link $a$ with $f_a^* = f_a^\dagger$, the term is zero. Therefore, the first term is strictly negative since there exists at least one link $a$ with $f_a^* \neq f_a^\dagger$ by assumption.

Next, we consider the second term. Since the point $\Bar{f} = \begin{bmatrix} f^{H\dagger} \\ f^{A*} \end{bmatrix}$ is feasible in $\mathcal{K}$, we substituting it into \eqref{eq: VI 1} and obtain
\begin{equation*}
c^H(\Bar{f}^*) \cdot (f^{H\dagger} - f^{H*}) = \sum_{a \in \mathcal{L}} t_a(f_a^*) (f_a^{H\dagger} - f_a^{H*}) \geq 0.
\end{equation*}
Similarly, substituting $\Bar{f} = \begin{bmatrix} f^{H*} \\ f^{A\dagger} \end{bmatrix}$ into \eqref{eq: VI 2}, we have
\begin{equation*}
c^H(\Bar{f}^\dagger) \cdot (f^{H*} - f^{H\dagger}) = \sum_{a \in \mathcal{L}} t_a(f_a^\dagger) (f_a^{H*} - f_a^{H\dagger}) \geq 0.
\end{equation*}
Adding the preceding two inequalities, we have
\begin{equation*}
\sum_{a \in \mathcal{L}} \big(t_a(f_a^*) - t_a(f_a^\dagger)\big) (f_a^{H\dagger} - f_a^{H*}) \geq 0.
\end{equation*}
Since $t_a(f_a) = k_a f_a^{n} + b_a$, we then have
\begin{align*}
f_a^* t_a'(f_a^*) - f_a^\dagger t_a'(f_a^\dagger) &= n k_a \big( (f_a^*)^{n} - (f_a^\dagger)^{n} \big) \\
&= n \big( t_a(f_a^*) - t_a(f_a^\dagger) \big).
\end{align*}
Therefore, we obtain
\begin{equation*}
\sum_{a \in \mathcal{L}} \big(f_a^* t_a'(f_a^*) - f_a^\dagger t_a'(f_a^\dagger)\big) (f_a^{H\dagger} - f_a^{H*}) \geq 0.
\end{equation*}
Combining the results for the two terms, we have shown that $LHS < 0$, which contradicts \eqref{eq: VI 5}. Thus, the assumption is false, and the aggregated link flow at equilibrium is unique. \qed

\section{Proof of Lemma \ref{lemma: improvement condition}}\label{appendix: proof of lemma 3}
First, we observe that the feasible set $\mathcal{K}$ can be decomposed into the Cartesian product of two sets $\mathcal{K}^H$ and $\mathcal{K}^A$, where
\begin{align*}
    \mathcal{K}^H = \big\{ f^H \;|\; \exists x^H \ge \mathbf{0}, f^H = \Delta x^H, \sum_{p \in \mathcal{P}} x_p^H = (1-\alpha) \big\},
\end{align*}
and 
\begin{align*}
    \mathcal{K}^A = \big\{ f^A \;|\; \exists x^A \ge \mathbf{0}, f^A = \Delta x^A, \sum_{p \in \mathcal{P}} x_p^A = \alpha \big\}.
\end{align*}
Based on Lemma \ref{lemma: VI formulation}, for the system with autonomous agents, substituting $\Bar{f} = \begin{bmatrix} f^{H*} \\ f^{A} \end{bmatrix}$ into the VI problem \eqref{eq: VI}, we have
\begin{equation}\label{eq: VI human}
c^A(\Bar{f}^*) \cdot (f^{A} - f^{A*}) \geq 0, \quad \forall f^A \in \mathcal{K}^A.
\end{equation}

By Assumption \ref{assump: BPR}, we have $t_a(f_a)$ being a convex and strictly increasing function. Hence, $f_a t_a(f_a)$ is strictly convex and strictly increasing. Therefore, 
\begin{equation*}
    \tilde{f}_a^* t_a(\tilde{f}_a^*) - f_a^* t_a(f_a^*) \ge \frac{d (f_a t_a(f_a))}{d f_a} \bigg|_{f_a = f_a^*} ( \tilde{f}_a^* - f_a^* ), 
\end{equation*}
with equality if and only if $\tilde{f}_a^* = f_a^*$. Summing over all links $a \in \mathcal{L}$, and by the definition of $c_a^A(f)$ and $S(f)$, we have
\begin{equation*}
    \tilde{S}^* - S^* = S(\tilde{f}^*) - S(f^{*}) \ge c^A(\Bar{f}^*) \cdot (\tilde{f}^* - f^{*}),
\end{equation*}
with equality if and only if $\tilde{f}^* = f^{*}$. Rearranging the terms, we have
\begin{equation}\label{eq: social cost and link cost}
    \tilde{S}^* - S^* \ge c^A(\Bar{f}^*) \cdot \big[ (\tilde{f}^* - f^{H*}) - f^{A*} \big].
\end{equation}
By the condition given in the lemma, there exists $\tilde{x}^*$ and $(x^{H*}, x^{A*})$ such that $x^{H*} \le \tilde{x}^{*}$. Thus, we have $\tilde{x}^{*} - x^{H*} \ge \mathbf{0}$ and $\sum_{p \in \mathcal{P}} (\tilde{x}_p^{*} - x_p^{H*}) = \alpha$. Let $f^A = \Delta (\tilde{x}^{*} - x^{H*})$, we have $f^A \in \mathcal{K}^A$. Substituting $f^A$ into \eqref{eq: VI human}, we obtain
\begin{equation*}
c^A(\Bar{f}^*) \cdot \big[ (\tilde{f}^* - f^{H*}) - f^{A*} \big] \geq 0.
\end{equation*}
Combining this with \eqref{eq: social cost and link cost}, we have $\tilde{S}^* - S^* \ge 0$. Note that $S^*$ only depends on the equilibrium aggregated link flow $f^{*}$, which is unique by Theorem \ref{thm: uniqueness of aggregated link flow}. Thus, the inequality holds for any equilibrium. Therefore, we conclude that $S^* \le \tilde{S}^*$.

Moreover, if $f^{*} \neq \tilde{f}^{*}$, the inequality in \eqref{eq: social cost and link cost} is strict. Thus, we have $\tilde{S}^* > S^*$.  \qed

\section{Improvement on Path Multigraphs}\label{appendix: proof of thm 3}

In this appendix, we constructively verify that the condition of Lemma \ref{lemma: improvement condition} ($x^{H*} \le \tilde{x}^*$) is always satisfied on path multigraphs. We begin with a \textit{simple path multigraph}, which is a network with only two nodes, the origin and destination, connected by multiple parallel links.

\begin{lemma}\label{lemma: parallel links}
In a simple path multigraph, the social cost always improves with the introduction of autonomous agents, i.e., $S^* \le \tilde{S}^*$. The inequality is strict as long as mixed autonomy equilibrium is not an equilibrium for the baseline system. 
\end{lemma}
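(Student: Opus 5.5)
The plan is to verify the hypothesis of Lemma~\ref{lemma: improvement condition}. In a simple path multigraph, paths coincide with links, so $\Delta$ is the identity and path flows equal link flows. Starting from a mixed-autonomy equilibrium $(x^{H*},x^{A*})$ with aggregate $x^*=x^{H*}+x^{A*}$, I will construct a baseline equilibrium $\tilde{x}^*$ with $\tilde{x}^*\ge x^{H*}$. Lemma~\ref{lemma: improvement condition} then gives $S^*\le\tilde{S}^*$ directly. Its strict part requires $x^*\ne\tilde{x}^*$. Since the baseline aggregate flow is unique by Theorem~\ref{thm: uniqueness of aggregated link flow}, this holds exactly when $x^*$ is not a baseline equilibrium, which is the strictness clause of the statement.

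The construction is a water-filling argument. Let $\lambda^H$ be the human equilibrium cost, and let $U=\{p: x_p^{H*}>0\}$, so that $t_p(x_p^*)=\lambda^H$ on $U$ and $t_p(x_p^*)\ge\lambda^H$ elsewhere. The baseline equilibrium $\tilde{x}^*$ on parallel links is characterized by a level $\tilde\lambda$: $\tilde{x}_p^*$ is the largest flow with $t_p(\tilde{x}_p^*)\le\tilde\lambda$. Because each $t_p$ is strictly increasing under Assumption~\ref{assump: BPR}, this flow equals $t_p^{-1}(\tilde\lambda)$ when $\tilde\lambda\ge b_p$ and $0$ otherwise. The level $\tilde\lambda$ is fixed by $\sum_p\tilde{x}_p^*=1$.

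The key claim is $\tilde\lambda\ge\lambda^H$. Suppose instead $\tilde\lambda<\lambda^H$. For $p\in U$ we have $t_p(\tilde{x}_p^*)\le\tilde\lambda<\lambda^H=t_p(x_p^*)$, so $\tilde{x}_p^*<x_p^*$. For $p\notin U$, either $\tilde{x}_p^*=0\le x_p^*$, or $t_p(\tilde{x}_p^*)=\tilde\lambda<\lambda^H\le t_p(x_p^*)$, which again gives $\tilde{x}_p^*<x_p^*$. Since $U$ is nonempty when $\alpha<1$, summing over all paths yields $1=\sum_p\tilde{x}_p^*<\sum_p x_p^*=1$, a contradiction. (When $\alpha=1$ the hypothesis $x^{H*}=0\le\tilde{x}^*$ is trivial.)

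With $\tilde\lambda\ge\lambda^H$ in hand, monotonicity gives, for each $p\in U$, $t_p(\tilde{x}_p^*)=\tilde\lambda\ge\lambda^H=t_p(x_p^*)$, so $\tilde{x}_p^*\ge x_p^*\ge x_p^{H*}$. For $p\notin U$ we have $x_p^{H*}=0\le\tilde{x}_p^*$. Hence $x^{H*}\le\tilde{x}^*$ componentwise, and Lemma~\ref{lemma: improvement condition} applies.

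The main obstacle is the comparison of the water levels $\tilde\lambda$ and $\lambda^H$. The argument has to handle carefully the paths unused by humans but used by autonomous agents, whose travel time exceeds $\lambda^H$. The case analysis above, which relies on strict monotonicity of $t_p$, is what makes this step work.
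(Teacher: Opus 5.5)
Your proof is correct. It reaches the same intermediate goal as the paper, a baseline equilibrium $\tilde x^*$ with $x^{H*}\le\tilde x^*$ followed by Lemma~\ref{lemma: improvement condition}, but by a different route.

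The paper builds $\tilde x^*$ with an algorithm. It starts from $\tilde x=x^*$ and repeatedly picks the cheapest path $q$ still carrying autonomous flow. If $t_q(\tilde x_q)=\lambda^H$, it simply relabels that flow as human. Otherwise it drains flow from $q$ into the current human-used set while raising their common level, stopping when $q$ empties or the level reaches the cost of the next path outside that set. The paper then has to argue termination, and that this draining step is always achievable, via an intermediate-value argument on $h(\lambda')=\sum_{p\in\mathcal P^H}\bigl(t_p^{-1}(\lambda')-\tilde x_p\bigr)$.

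You instead use the one-parameter water-filling description of the baseline equilibrium on parallel links. A single conservation-of-flow contradiction gives $\tilde\lambda\ge\lambda^H$, and $\tilde x_p^*\ge x_p^*\ge x_p^{H*}$ on human-used paths follows at once. Your argument is shorter and avoids all the bookkeeping (termination, overshooting, maintaining $\mathcal P^H$ and $\mathcal P^A$). Your strictness step via Theorem~\ref{thm: uniqueness of aggregated link flow} also makes explicit what the paper leaves implicit. The paper's procedure is more operational and conveys the ``relabel autonomous agents as humans'' intuition. Both rest on the same fact: on parallel links each path cost depends only on its own flow and is strictly increasing.

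One small slip. In the contradiction step you write $t_p(\tilde x_p^*)\le\tilde\lambda$ for $p\in U$, which fails when $b_p>\tilde\lambda$: then $\tilde x_p^*=0$ and $t_p(0)=b_p$. Your phrase ``largest flow with $t_p\le\tilde\lambda$'' has the same gap. In that case $\tilde x_p^*=0<x_p^{H*}\le x_p^*$ anyway, so the strict inequality you need still holds.
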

\begin{proof}
    Since there are only two nodes connected by $L$ parallel links, there are $L$ paths in total, each corresponding to one link. For simplicity, we also call the links as paths in this proof.

    Consider the mixed autonomy system. Let $\mathcal{P}^H = \{ p \in \mathcal{P} \;|\; x_p^{H*} > 0 \}$ and $\mathcal{P}^A = \{ p \in \mathcal{P} \;|\; x_p^{A*} > 0 \}$ denote the sets of paths used by human and autonomous agents at equilibrium, respectively. By the equilibrium condition in Theorem \ref{thm: social cost to path cost}, we have
    \begin{align*}
        \begin{cases}
            C_p^H(x_p^{*}) = \lambda^H, \quad \forall p \in \mathcal{P}^H,\\
            C_p^H(x_p^{*}) \ge \lambda^H, \quad \forall p \in \mathcal{P} \setminus \mathcal{P}^H,
        \end{cases}
    \end{align*}
    where $x_p^{*} = x_p^{H*} + x_p^{A*}$. Note that each $C_p^H$ only depends on the flow on path $p$, i.e., $C_p^H(x_p^{*}) = t_p(x_p^{*})$, since there is no interaction among different paths. 

    Next, we consider an iterative process to construct an equilibrium path flow pattern for the baseline system, starting from the equilibrium in the mixed autonomy system. 

    \begin{algorithm}
    \caption{Constructing an equilibrium path flow pattern for the baseline system}
    \begin{algorithmic}[1]
        \State \textbf{Input:} Mixed autonomy system equilibrium path flow pattern $(x^{H*}, x^{A*})$, human agents' cost $\lambda^H$
        \State \textbf{Initialize:} Set $\tilde{x}_p = x_p^{*}$ for all $p \in \mathcal{P}$
        \While{$\mathcal{P}^A \neq \emptyset$}
            \State Pick a path $q \in \arg\min_{p \in \mathcal{P}^A} C_p^H(\tilde{x}_p)$:
            \If{$C_q^H(\tilde{x}_q) = \lambda^H$}  
                \State $\mathcal{P}^H \gets \mathcal{P}^H \cup \{q\}, \quad \mathcal{P}^A \gets \mathcal{P}^A \setminus \{q\}$
            \Else
                \State Find appropriate $\{\varepsilon_p\}_{p \in \mathcal{P}}$ such that $\varepsilon_q < 0$, $\varepsilon_p > 0$
                \Statex\hspace*{3em}for all $p \in \mathcal{P}^H$ and $\varepsilon_p = 0$ o.w., $\sum_{p \in \mathcal{P}} \varepsilon_p = 0$,
                \Statex\hspace*{3em}$C_p^H(\tilde{x}_p+\varepsilon_p) = \lambda'$, $\forall p \in \mathcal{P}^H$, and either of the
                \Statex\hspace*{3em}following holds:
                \Statex\hspace*{3em}1) $\varepsilon_q=-\tilde{x}_q$, $\lambda' = \min_{p \in \mathcal{P}} C_p^H(\tilde{x}_p+\varepsilon_p)$
                \Statex\hspace*{3em}2) $\varepsilon_q>-\tilde{x}_q$, $\lambda' = \min_{p \in \mathcal{P} \setminus \mathcal{P}^H} C_p^H(\tilde{x}_p+\varepsilon_p)$      
                % \State $\tilde{x}_q \gets \tilde{x}_q - \varepsilon$, $\tilde{x}_p \gets \tilde{x}_p + \varepsilon_p, \forall p \in \mathcal{P}^H$
                % \Statex\hspace*{3em}with appropriate $\varepsilon, \varepsilon_p>0$ such that $\sum_{p} \varepsilon_p = \varepsilon$,
                % \Statex\hspace*{3em}$C_p^H(\tilde{x}_p) = \lambda'$, $\forall p \in \mathcal{P}^H$, and either of the
                % \Statex\hspace*{3em}following holds:
                % \Statex\hspace*{3em}1) $\varepsilon=\tilde{x}_q$, $\lambda' = \min_{r \in \mathcal{P}} C_r^H(\tilde{x}_r)$
                % \Statex\hspace*{3em}2) $\varepsilon<\tilde{x}_q$, $\lambda' = \min_{r \in \mathcal{P} \setminus \mathcal{P}^H} C_r^H(\tilde{x}_r)$
                \If{$\varepsilon_q=-\tilde{x}_q$}
                    \State $\mathcal{P}^A \gets \mathcal{P}^A \setminus \{q\}$
                \EndIf
                % \While{$C_q^H(\tilde{x}_q) > \lambda^H$ and $\tilde{x}_q > 0$}
                %     \State $\mathcal{P}^H \gets \arg\min_{p \in \mathcal{P}} C_p^H(\tilde{x}_p)$
                %     \State $\tilde{x}_q \gets \tilde{x}_q - \varepsilon$, $\tilde{x}_p \gets \tilde{x}_p + \varepsilon_p, \forall p \in \mathcal{P}^H$
                %     \Statex\hspace*{6em}with appropriate $\varepsilon>0$, $\varepsilon_p$ such that
                %     \Statex\hspace*{6em}$\sum_{p \in \mathcal{P}^H} \varepsilon_p = \varepsilon$ and either
                %     \Statex\hspace*{6em}$
                %     \begin{cases}
                %         C_p^H(\tilde{x}_p) = \lambda', \quad\forall p \in \mathcal{P}^H,\\
                %         C_p^H(\tilde{x}_p) \ge \lambda', \quad\forall p \in \mathcal{P},\\
                %         \tilde{x}_q = 0,
                %     \end{cases}
                %     $
                %     \Statex\hspace*{6em}or
                %     \State $\lambda^H \gets \lambda'$
                %     % \State $\mathcal{P}^H \gets \{ p \in \mathcal{P} \;|\; C_p^H(\tilde{x}_p) = \lambda^H \}$
                % \EndWhile
                \State $\mathcal{P}^H \gets \mathcal{P}^H \cup \argmin_{p \in \mathcal{P} \setminus \mathcal{P}^H} C_p^H(\tilde{x}_p+\varepsilon_p)$
                \State $\lambda^H \gets \lambda'$
                \State $\tilde{x}_p \gets \tilde{x}_p + \varepsilon_p$, $\forall p \in \mathcal{P}$
            \EndIf
        \EndWhile
        \State \textbf{Output:} Equilibrium path flow pattern $\tilde{x}$ for the baseline system
    \end{algorithmic}
    \end{algorithm}

    The intuition behind the algorithm is to iteratively mark autonomous agents as human agents, while ensuring that human agents have no incentive to change their paths at each step. Specifically, we keep track of the set of paths used by human agents $\mathcal{P}^H$ and the set of paths used by autonomous agents $\mathcal{P}^A$. At each iteration, we pick a path $q$ from $\mathcal{P}^A$ with the lowest travel time cost. If its cost equals $\lambda^H$, we simply mark all agents on this path as human agents. Otherwise, we gradually decrease the flow on this path while increasing the flow on paths in $\mathcal{P}^H$ such that users on paths in $\mathcal{P}^H$ still have no incentive to deviate, and these newly added flows are marked as human agents. The amount is determined such that either path $q$ becomes unused or the travel time cost on paths in $\mathcal{P}^H$ reaches that of the next best path outside $\mathcal{P}^H$. The process continues until all autonomous agents are marked as human agents.

    We first presume step 8 is always achievable and have an overview of the algorithm. We have the following observations.
    \begin{enumerate}
        \item The algorithm will terminate in finite time, since at each iteration, either one path is removed from $\mathcal{P}^A$ (step 10) or at least one path is added to $\mathcal{P}^H$ (step 12). The maximum number of iterations is bounded by the total number of paths.
        \item Paths in $\mathcal{P}^H$ always have the same travel time cost, which is also the minimum among all paths (step 12).
        \item Only paths in $\mathcal{P}^H$ have non-zero flow at the end of the algorithm, due to the termination condition (step 3) and the update rule (step 8).
        \item During the entire process, no path will be removed from $\mathcal{P}^H$, and path flows on paths in $\mathcal{P}^H$ only increase (step 8).
    \end{enumerate}
    Therefore, the output path flow pattern $\tilde{x}$ is an equilibrium for the baseline system. Moreover, we have $\tilde{x}_p \ge x_p^{H*}$ for all $p \in \mathcal{P}$. Thus, by Lemma \ref{lemma: improvement condition}, we have $S^* \le \tilde{S}^*$. The inequality is strict as long as $x^* \neq \tilde{x}^*$.

    The rest of the proof is to show that step 8 is always achievable. First, let us remove the constraint $\sum_{p \in \mathcal{P}} \varepsilon_p = 0$ temporarily. Since $C_p^H(\cdot) = t_p(\cdot)$ is continuous, strictly increasing and unbounded from above, we can always find $\varepsilon_p > 0$ such that $t_p(\tilde{x}_p+\varepsilon_p) = \lambda'$, $\forall p \in \mathcal{P}^H$, for any given $\lambda' > \lambda^H$. In fact, the inverse function $t_p^{-1}(\cdot)$ exists, and it is also continuous and strictly increasing. Thus, we can set 
    \begin{equation*}
        \varepsilon_p = t_p^{-1}(\lambda') - \tilde{x}_p, \quad \forall p \in \mathcal{P}^H.
    \end{equation*}
    Next, we show that $\sum_{p \in \mathcal{P}} \varepsilon_p = 0$ holds for any $\varepsilon_q < 0$. To this end, we define $h(\lambda') = \sum_{p \in \mathcal{P}^H} \big(t_p^{-1}(\lambda') - \tilde{x}_p\big)$. Since each $t_p^{-1}(\cdot)$ is continuous and strictly increasing, $h(\lambda')$ is also continuous and strictly increasing. Moreover, $h(\lambda') \to \infty$ as $\lambda' \to \infty$. 

    Therefore, we have a continuous function $h(\lambda')$ that starts from $h(\lambda^H) = \sum_{p \in \mathcal{P}^H} ( \tilde{x}_p - \tilde{x}_p ) = 0$ and goes to infinity as $\lambda' \to \infty$. Thus, for any $\varepsilon_q < 0$, there exists a unique $\lambda' > \lambda^H$ such that $h(\lambda') = -\varepsilon_q$. Hence, we can always find $\{\varepsilon_p\}_{p \in \mathcal{P}}$ satisfying $\sum_{p \in \mathcal{P}} \varepsilon_p = 0$ and $C_p^H(\tilde{x}_p+\varepsilon_p) = \lambda'$, $\forall p \in \mathcal{P}^H$. The two other conditions in step 8 are to prevent over-shooting, by ensuring that either $\tilde{x}_q$ becomes zero or the travel time cost on paths in $\mathcal{P}^H$ reaches that of the next best path outside $\mathcal{P}^H$, which are achievable since $C_p^H$ is continuous and strictly increasing.

    Specifically, if (1) is satisfied, then paths in $\mathcal{P}^H$ remain the ones with the lowest cost after the update, and we will remove path $q$ from $\mathcal{P}^A$ as its flow becomes zero. If (2) is satisfied, then after the update, there exists at least one path outside $\mathcal{P}^H$ that has the same cost as paths in $\mathcal{P}^H$, and we will add these paths into $\mathcal{P}^H$. In this way, we ensure that paths in $\mathcal{P}^H$ always have the lowest cost among all paths after each update, maintaining the equilibrium condition for human agents.
\end{proof}

By Definition \ref{def: path multigraph}, any path multigraph can be decomposed into multiple simple path multigraphs with only two nodes connected by parallel links. It is not hard to see that the equilibrium in each simple path multigraph is independent of others, since there is no interaction among different simple path multigraphs. For both the baseline and mixed autonomy system, a flow pattern is an equilibrium for the entire path multigraph if and only if its corresponding flow patterns in all simple path multigraphs are equilibria for these simple path multigraphs as well. Furthermore, the social cost is additive across different simple path multigraphs. 

Therefore, by applying Lemma \ref{lemma: parallel links} to each simple path multigraph, we obtain $x^{H*} \le \tilde{x}^*$ for the entire path multigraph. By Lemma \ref{lemma: improvement condition}, it follows that $S^* \le \tilde{S}^*$, with strict inequality whenever $f^* \neq \tilde{f}^*$. \qed

\section{Proof of Theorem \ref{thm: deterioration condition}}\label{appendix: proof of thm 4}

We prove this by the following steps: first, we derive the baseline system equilibrium in an explicit form; second, we construct a feasible path flow pattern based on the equilibrium in baseline system; third, we show that the constructed flow pattern is indeed an equilibrium for the mixed autonomy system, given sufficiently small $\alpha$; finally, we compare the corresponding social cost with that of the baseline system equilibrium. 

With linear travel time assumption, we have $t_a(f_a) = k_a f_a + b_a$, where $k_a > 0$ and $b_a \ge 0$. Group $t_a(f_a)$ into a column vector function $t(f)$, where the $a$-th entry is $t_a(f_a)$, we have $t(f) = K f + b$.

Consider the baseline system. For any path $p \in \mathcal{P}$, the cost of human agents is given by 
\begin{align*}
    C_p^H(\tilde{x}^*) &= \sum_{a \in \mathcal{L}} \Delta_{ap} t_a(\tilde{f}_a^*) \\
    &= \big[ \Delta^T t(\tilde{f}^*) \big]_p \\
    &= \big[ \Delta^T (K \tilde{f}^* + b) \big]_p \\
    &= \big[ \Delta^T K \Delta \tilde{x}^* + \Delta^T b \big]_p. 
\end{align*}
Since $\tilde{x}^*$ is an equilibrium for the baseline system, $C_p^H(\tilde{x}^*) = \tilde{\lambda}^H$ holds for all paths $p$ used by human agents at equilibrium, i.e., for all $p \in \mathcal{V}$. As $\tilde{x}_p^* = 0$ for all $p \notin \mathcal{V}$, it suffices to consider the following equation:
\begin{equation}\label{eq: baseline equilibrium}
    \Delta_{\mathcal{V}}^T K \Delta_{\mathcal{V}} \tilde{x}_{\mathcal{V}}^* + \Delta_{\mathcal{V}}^T b = \tilde{\lambda}^H \mathbf{1}.
\end{equation}
Note that we use $\tilde{x}_{\mathcal{V}}^*$ to denote the sub-vector of $\tilde{x}^*$ with entries corresponding to paths in $\mathcal{V}$.

As we assume $M_{\mathcal{V}} = \Delta_{\mathcal{V}}^T K \Delta_{\mathcal{V}}$ is invertible, we can solve \eqref{eq: baseline equilibrium} to obtain
\begin{equation*}
    \tilde{x}_{\mathcal{V}}^* = M_{\mathcal{V}}^{-1} \big( \tilde{\lambda}^H \mathbf{1} - \Delta_{\mathcal{V}}^T b \big).
\end{equation*}
Since $\tilde{x}^*$ satisfies the demand constraint $\sum_{p \in \mathcal{P}} \tilde{x}_p^* = 1$, we have $\sum_{p \in \mathcal{V}} \tilde{x}_p^* = 1$. Thus, we can solve for $\tilde{\lambda}^H$ as
\begin{equation*}
    \tilde{\lambda}^H = \frac{1 + \mathbf{1}^T M_{\mathcal{V}}^{-1} \Delta_{\mathcal{V}}^T b}{\mathbf{1}^T M_{\mathcal{V}}^{-1} \mathbf{1}}.
\end{equation*}
Next, we construct a path flow pattern $(x^{H}, x^{A})$ for the mixed autonomy system as follows:
\begin{enumerate}
    \item Autonomous agents use path $q = \arg\min_{p \in \mathcal{P}} C_p^A(\tilde{x}^*)$;
    \item Human agents use and only use paths in $\mathcal{V}$, while maintaining the travel time cost on these paths equalized.
\end{enumerate}
Specifically, for autonomous agents, we let $x^A = \alpha e_q$, where $e_q$ is the unit vector with $1$ at the $q$-th entry and $0$ elsewhere. For human agents, we have the following equation:
\begin{equation*}
    \Delta_{\mathcal{V}}^T \big[ K \Delta (x^H+x^A) + b \big] = \lambda^H \mathbf{1}.
\end{equation*}
Since human agents only use paths in $\mathcal{V}$, we have $x_p^H = 0$ for all $p \notin \mathcal{V}$. Thus, it suffices to consider the following equation:
\begin{equation}\label{eq: human flow construction}
    \Delta_{\mathcal{V}}^T K \Delta_{\mathcal{V}} x_{\mathcal{V}}^{H} + \Delta_{\mathcal{V}}^T K \Delta e_q \alpha + \Delta_{\mathcal{V}}^T b = \lambda^H \mathbf{1},
\end{equation}
As $M_{\mathcal{V}} = \Delta_{\mathcal{V}}^T K \Delta_{\mathcal{V}}$ is invertible, we can solve \eqref{eq: human flow construction} to obtain
\begin{equation*}
    x_{\mathcal{V}}^{H} = M_{\mathcal{V}}^{-1} \big( \lambda^H \mathbf{1} - \Delta_{\mathcal{V}}^T K \Delta e_q \alpha - \Delta_{\mathcal{V}}^T b \big).
\end{equation*}
Since $x^{H}$ needs to satisfy the demand constraint $\sum_{p \in \mathcal{P}} x_p^{H} = 1 - \alpha$, we have $\sum_{p \in \mathcal{V}} x_p^{H} = 1 - \alpha$. Thus, we can solve for $\lambda^H$ as
\begin{align}\label{Eq: appendix temp 1}
    \lambda^H &= \frac{1 - \alpha + \mathbf{1}^T M_{\mathcal{V}}^{-1} \big( \Delta_{\mathcal{V}}^T K \Delta e_q \alpha + \Delta_{\mathcal{V}}^T b \big)}{\mathbf{1}^T M_{\mathcal{V}}^{-1} \mathbf{1}},\nonumber\\
    &= \tilde{\lambda}^H + \alpha \frac{\mathbf{1}^T M_{\mathcal{V}}^{-1} \Delta_{\mathcal{V}}^T K \Delta e_q - 1}{\mathbf{1}^T M_{\mathcal{V}}^{-1} \mathbf{1}}.
\end{align}
Let us denote $\gamma = \frac{\mathbf{1}^T M_{\mathcal{V}}^{-1} \Delta_{\mathcal{V}}^T K \Delta e_q - 1}{\mathbf{1}^T M_{\mathcal{V}}^{-1} \mathbf{1}}$. Then we can write $\lambda^H = \tilde{\lambda}^H + \alpha \gamma$. Consequently, we have
\begin{equation*}
    x_{\mathcal{V}}^{H} = \tilde{x}_{\mathcal{V}}^* + \alpha d,
\end{equation*}
where $d = \gamma M_{\mathcal{V}}^{-1} \mathbf{1} - M_{\mathcal{V}}^{-1} \Delta_{\mathcal{V}}^T K \Delta e_q$.

Next, we show that the constructed path flow pattern $(x^{H}, x^{A})$ is indeed an equilibrium for the mixed autonomy system, given sufficiently small $\alpha$. For this purpose, we need to verify the equilibrium conditions in Theorem \ref{thm: social cost to path cost}. Specifically,
\begin{enumerate}
    \item Human agents have no incentive to use paths outside $\mathcal{V}$, i.e., $C_p^H(x^{H},x^{A}) \ge \lambda^H$ for all $p \notin \mathcal{V}$;
    \item Human agents flow pattern is feasible, i.e., $x^{H} \ge \mathbf{0}$;
    \item Autonomous agents have no incentive to use paths other than $q$, i.e., $C_p^A(x^{H},x^{A}) \ge C_q^A(x^{H},x^{A})$ for all $p \in \mathcal{P}$.
\end{enumerate}

\noindent\underline{Condition (1)}: 

For any path $p \notin \mathcal{V}$, we have
\begin{align}\label{Eq: appendix temp 2}
    C_p^H(x^{H},x^{A}) &= \big[ \Delta^T (K f + b) \big]_p \nonumber\\
    &= \big[ \Delta^T K \Delta (x^{H} + x^{A}) + \Delta^T b \big]_p \nonumber\\
    &= \big[ \Delta^T K \Delta_{\mathcal{V}} x_{\mathcal{V}}^{H} + \Delta^T K \Delta e_q \alpha + \Delta^T b \big]_p \nonumber\\
    &= \big[ \Delta^T K \Delta_{\mathcal{V}} (\tilde{x}_{\mathcal{V}}^* + \alpha d) + \Delta^T K \Delta e_q \alpha + \Delta^T b \big]_p \nonumber\\
    &= C_p^H(\tilde{x}^*) + \alpha \big[ \Delta^T K \Delta_{\mathcal{V}} d + \Delta^T K \Delta e_q \big]_p.
\end{align}
Since $\lambda^H = \tilde{\lambda}^H + \alpha \gamma$, we have
\begin{align*}
    C_p^H(x^{H},x^{A}) - \lambda^H &= C_p^H(\tilde{x}^*) - \tilde{\lambda}^H \\
    &\quad + \alpha \big( \big[ \Delta^T K \Delta_{\mathcal{V}} d + \Delta^T K \Delta e_q \big]_p - \gamma \big).
\end{align*}
By the equilibrium condition in the baseline system, we have $C_p^H(\tilde{x}^*) - \tilde{\lambda}^H > 0$ for all $p \notin \mathcal{V}$. Therefore, as long as we choose $\alpha$ sufficiently small, we have $C_p^H(x^{H},x^{A}) - \lambda^H \ge 0$ for all $p \notin \mathcal{V}$. 

\noindent\underline{Condition (2)}:

For any path $p \in \mathcal{V}$, we have
\begin{equation*}
    x_p^{H} = \tilde{x}_p^* + \alpha d_p.
\end{equation*}
Since $\tilde{x}_p^* > 0$ for all $p \in \mathcal{V}$, we can choose $\alpha$ sufficiently small such that $x_p^{H} \ge 0$ for all $p \in \mathcal{V}$. For any path $p \notin \mathcal{V}$, we have $x_p^{H} = 0$. Therefore, $x^{H} \ge \mathbf{0}$.

\noindent\underline{Condition (3)}:
For any path $p \in \mathcal{P}$, we have
\begin{align*}
    C_p^A(x^{H},x^{A}) &= \big[ \Delta^T ( 2 K f + b ) \big]_p \\
    &= \big[ \Delta^T ( 2 K \Delta (x^{H} + x^{A}) + b ) \big]_p \\
    &= \big[ \Delta^T ( 2 K \Delta_{\mathcal{V}} x_{\mathcal{V}}^{H} + 2 K \Delta e_q \alpha + b ) \big]_p \\
    &= C_p^A(\tilde{x}^*) + 2 \alpha \big[ \Delta^T K \Delta_{\mathcal{V}} d + \Delta^T K \Delta e_q \big]_p.
\end{align*}
Since $q = \arg\min_{p \in \mathcal{P}} C_p^A(\tilde{x}^*)$, we have $C_p^A(\tilde{x}^*) - C_q^A(\tilde{x}^*) > 0$ for all $p \in \mathcal{P}$ and $p \neq q$. Therefore, as long as we choose $\alpha$ sufficiently small, we have $C_p^A(x^{H},x^{A}) - C_q^A(x^{H},x^{A}) \ge 0$ for all $p \in \mathcal{P}$.

Thus, we have shown that the constructed path flow pattern $(x^{H}, x^{A})$ is indeed an equilibrium for the mixed autonomy system, given sufficiently small $\alpha$. Next, we compare the corresponding social cost with that of the baseline system equilibrium. Specifically, the social cost at baseline system equilibrium is $\tilde{S}^* = \tilde{\lambda}^H$, while the social cost at mixed-autonomy system equilibrium is 
\begin{align*}
    S^* =& (1-\alpha) \lambda^H + \alpha C_q^H(x^H, x^A)\\
    =& (1-\alpha)(\tilde{\lambda}^H + \alpha \gamma) \\
    &+ \alpha\Big\{ C_q^H(\tilde{x}^*) + \alpha \big[ \Delta^T K \Delta_{\mathcal{V}} d + \Delta^T K \Delta e_q \big]_q \Big\},
\end{align*}
where the first equality is by definition of the social cost, and the second equality is based on \eqref{Eq: appendix temp 1} and \eqref{Eq: appendix temp 2}. Therefore, having $S^* > \tilde{S}^*$ is equivalent to 
\begin{align}
    C_q^H(\tilde{x}^*) - \tilde{\lambda}^H + \gamma + \alpha \Big\{\big[ \Delta^T K \Delta_{\mathcal{V}} d \!+\! \Delta^T K \Delta e_q \big]_q \!-\! \gamma \Big\} > 0. \label{eq: deterioration condition 1}
\end{align}
As $\alpha \to 0$, the left-hand side approaches $C_q^H(\tilde{x}^*) - \tilde{\lambda}^H + \gamma$. Hence, if we have
\begin{equation*}
    C_q^H(\tilde{x}^*) - \tilde{\lambda}^H + \gamma > 0,
\end{equation*} 
then $S^* > \tilde{S}^*$.
\qed

\section{Proof of Theorem \ref{thm: no effect condition}}\label{appendix: proof of thm 5}

Consider the mixed-autonomy system. Under Assumption \ref{assump: BPR}, the path cost for human agents is
\begin{equation*}
    C_p^H(f) = \sum_{a \in \mathcal{L}} \Delta_{ap} (k_a f_a^{n} + b_a).
\end{equation*}
Since all paths have the same free-flow travel time $b_0 = \sum_{a \in \mathcal{L}} \Delta_{ap} b_a$, we have
\begin{equation*}
    C_p^H(f) = \sum_{a \in \mathcal{L}} \Delta_{ap} k_a f_a^{n} + b_0, \quad \forall p \in \mathcal{P}.
\end{equation*}
Similarly, the path cost for autonomous agents is
\begin{align*}
    C_p^A(f) &= \sum_{a \in \mathcal{L}} \Delta_{ap} \big( (n+1) k_a f_a^{n} + b_a \big) \\
    &= (n+1)\sum_{a \in \mathcal{L}} \Delta_{ap} k_a f_a^{n} + b_0, \quad \forall p \in \mathcal{P}.
\end{align*}
Since $n+1 > 0$ is a positive multiplicative constant and $b_0$ is a common additive constant, both $C_p^H$ and $C_p^A$ induce the same ranking over paths: for any two paths $p, q \in \mathcal{P}$, $C_p^H(f) \le C_q^H(f)$ if and only if $C_p^A(f) \le C_q^A(f)$. Therefore, at equilibrium, both agent types rank paths identically. The mixed equilibrium conditions reduce to a single-class Wardrop problem with total demand 1 — the same as the baseline. By uniqueness of aggregated flow established in Theorem \ref{thm: uniqueness of aggregated link flow}, $f^* = \tilde{f}^*$. Consequently, $S^* = \tilde{S}^*$. \qed

\section{Proof of Theorem \ref{thm: centralized equals decentralized}}\label{appendix: proof of thm 6}
First, we decompose the feasible set $\mathcal{K}$ into the Cartesian product of two sets $\mathcal{K}^H$ and $\mathcal{K}^A$, where
\begin{align*}
    \mathcal{K}^H = \big\{ f^H \;|\; \exists x^H \ge \mathbf{0}, f^H = \Delta x^H, \sum_{p \in \mathcal{P}} x_p^H = (1-\alpha) \big\},
\end{align*}
and 
\begin{align*}
    \mathcal{K}^A = \big\{ f^A \;|\; \exists x^A \ge \mathbf{0}, f^A = \Delta x^A, \sum_{p \in \mathcal{P}} x_p^A = \alpha \big\}.
\end{align*}

Given autonomous agents' response $f^{A*}$, we observe that the human agents' best response $f^{H*}$ has the same condition in both the decentralized and centralized scenarios, which is equivalently to solving the VI problem:
\begin{equation*}
c^H(f^{H*}, f^{A*}) \cdot (f^{H} - f^{H*}) \geq 0, \quad \forall f^H \in \mathcal{K}^H.
\end{equation*}

Given human agents' response $f^{H*}$, the autonomous agents' best response $f^{A*}$ is the solution to the optimization problem:
\begin{align*}
    \min_{f^A} \quad &S(f^{H*}, f^A) \\
    \text{s.t.} \quad &f^A \in \mathcal{K}^A.
\end{align*}
Since $t_a$ is continuously differentiable, $S(f^{H*}, f^A)$ is also continuously differentiable. Plus, the feasible set $\mathcal{K}^A$ is convex and compact. Thus, by Proposition 1.2 in literature \cite{nagurney_2013_network}, $f^{A*}$ is a solution of the following VI problem:
\begin{equation*}
c^A(f^{H*}, f^{A*}) \cdot (f^{A} - f^{A*}) \geq 0, \quad \forall f^A \in \mathcal{K}^A,
\end{equation*}
Furthermore, since $t_a$ is convex and non-decreasing, $S(f^{H*}, f^A)$ is also convex in $f^A$. Thus, by Proposition 1.3 in \cite{nagurney_2013_network}, the solution to the preceding VI problem is also a solution to the optimization problem. 

The two VI problems for human and autonomous agents are equivalent to the VI formulation in Lemma \ref{lemma: VI formulation}. Therefore, the equilibrium link flow pattern $(f^{H*}, f^{A*})$ in the decentralized scenario is also an equilibrium link flow pattern in the centralized scenario, and vice versa. \qed

\end{document}